\documentclass[11pt]{article}

\usepackage[margin=0.875in]{geometry}
\usepackage{amsmath,amssymb,amsthm,mathtools}
\usepackage{booktabs}
\usepackage{enumitem}
\usepackage{float}
\usepackage{microtype}
\usepackage{tikz}
\usepackage[colorlinks=true,linkcolor=blue,citecolor=blue,urlcolor=blue]{hyperref}

\newtheorem{theorem}{Theorem}
\newtheorem{corollary}[theorem]{Corollary}
\newtheorem{proposition}[theorem]{Proposition}
\newtheorem{lemma}[theorem]{Lemma}
\newtheorem{remark}[theorem]{Remark}
\newtheorem{definition}[theorem]{Definition}

\newcommand{\R}{\mathbb R}
\newcommand{\Z}{\mathbb Z}
\newcommand{\T}{\mathbb T}

\newcommand{\dd}{\,\mathrm d}
\newcommand{\tr}{\operatorname{tr}}

\newcommand{\eps}{\varepsilon}
\newcommand{\av}[1]{\left\langle #1\right\rangle}
\newcommand{\Ical}{\mathcal I}
\newcommand{\Qcal}{\mathcal Q}
\newcommand{\Dcal}{\mathcal D}

\newcommand{\Hmat}{\mathsf H}

\title{A Hexagonal Counterexample to Log-Convexity of Fisher Information Along the Heat Flow}
\author{Jiayang Zou\textsuperscript{1,2}, Luyao Fan\textsuperscript{2}, Jiayang Gao\textsuperscript{2}, and Jia Wang\textsuperscript{2}\\[0.5ex]
\small \textsuperscript{1}Stanford University, Stanford, CA, USA; \texttt{jyangzou@stanford.edu}\\
\small \textsuperscript{2}Shanghai Jiao Tong University, Shanghai, China; \texttt{\char123 qiudao, fanluyao, gjy0515, jiawang\char125 @sjtu.edu.cn}}
\date{}

\begin{document}
\maketitle

\begin{abstract}
We construct a smooth, strictly positive, Gaussian-decaying density on $\R^2$ for which Fisher information along the heat flow is not log-convex.  This disproves the Cheng--Geng log-convexity conjecture in dimension two and, by tensorization, in every dimension $d\ge2$.  Consequently, the multidimensional forms of the Gaussian completely monotone conjecture, McKean's conjecture, and Toscani's entropy power conjecture also fail, complementing the one-dimensional counterexample of Gu and Sellke.  Our construction is a small hexagonal perturbation on the triangular torus, transferred to $\R^2$ by a Gaussian envelope and supported by explicit two-dimensional numerics.  We also initiate the study of the sharp constants $\theta_d^*$ by proving $\theta_1^*=1$, establishing monotonicity in the dimension, and identifying a dichotomy for the asymptotic constant $\theta_\infty^*$ governed by the sign of $\Dcal$.  The explicit two-dimensional counterexample was obtained through GPT-5.5 Pro-assisted exploration.
\end{abstract}

\section{Introduction}

Let $(P_t)_{t\ge0}$ denote the heat semigroup on $\R^d$ with the probability convention
\begin{equation}\label{eq:heat}
        P_t=e^{t\Delta/2},
        \qquad
        \partial_t f_t=\frac12\Delta f_t.
\end{equation}
If the random variable $X$ has density $f$, then $P_t f$ is the density of $X+\sqrt t Z$, where $Z$ is a standard Gaussian vector independent of $X$.  For a smooth positive density $f_t$, the Fisher information is
\begin{equation}\label{eq:Fisher}
        I(t)=\int_{\R^d}\frac{|\nabla f_t|^2}{f_t}\dd x
             =\int_{\R^d}|\nabla \log f_t|^2 f_t\dd x.
\end{equation}
In 2015, Cheng and Geng \cite{ChengGeng2015} proved sign alternation for the third and fourth derivatives of entropy along one-dimensional heat flow and formulated two conjectures.  The first is the Gaussian completely monotone conjecture (GCMC), asserting complete monotonicity of the Fisher information $J(X+\sqrt t Z)$ along heat flow; the second is the log-convexity conjecture for this same Fisher-information profile.  In our notation, the second conjecture is
\begin{equation}\label{eq:log-convexity}
        I(t)I''(t)-I'(t)^2\ge0.
\end{equation}
The first conjecture implies the second, since every positive completely monotone function is log-convex.  There is also a broader hierarchy.  McKean's Gaussian optimality conjecture \cite{McKean1966} is stronger than GCMC, and Wang proved that Toscani's entropy power conjecture \cite{Toscani2015} implies McKean's conjecture \cite{Wang2024}.  Thus the implication chain takes the form
\begin{equation}\label{eq:hierarchy}
        \begin{gathered}
                \text{Entropy power conjecture}\\
                \Downarrow\\[-0.15em]
                \text{McKean's conjecture}\\
                \Downarrow\\[-0.15em]
                \text{GCMC}\\
                \Downarrow\\[-0.15em]
                \text{Log-convexity conjecture}.
        \end{gathered}
\end{equation}
A convenient static reformulation of the log-convexity defect is obtained by writing $f=e^u$ and $\Hmat_f=-\nabla^2u$, and then setting
\[
        \Ical[f]=\int_{\R^d}\tr(\Hmat_f)f\dd x,\qquad
        \Qcal[f]=\int_{\R^d}\tr(\Hmat_f^2)f\dd x,
\]
\[
        \Dcal[f]=\int_{\R^d}\Big(|\nabla\Hmat_f|^2+2\tr(\Hmat_f^3)\Big)f\dd x.
\]
Section \ref{sec:heat-flow} records the standard identities
\[
        I(t)=\Ical[f_t],\qquad I'(t)=-\Qcal[f_t],\qquad I''(t)=\Dcal[f_t],
\]
so that
\begin{equation}\label{eq:intro-static-defect}
        I(t)I''(t)-I'(t)^2=\Ical[f_t]\Dcal[f_t]-\Qcal[f_t]^2.
\end{equation}

There has been substantial work on this circle of conjectures.  Cheng and Geng established the first nontrivial higher-order one-dimensional sign results for the GCMC \cite{ChengGeng2015}; Zhang, Anantharam, and Geng, together with later semidefinite-programming work, developed effective tools for log-concave and related higher-order inequalities \cite{ZhangAnantharamGeng2018,GuoYuanGao2021}; and Liu and Gao proved the weaker square-root convexity inequality in dimension two \cite{LiuGao2023}, namely
\begin{equation}\label{eq:liu-gao}
        2\Ical[f]\Dcal[f]\ge \Qcal[f]^2.
\end{equation}
In dimension one, Ledoux, Nair, and Wang proved \eqref{eq:log-convexity} \cite{LedouxNairWang}.  More recently, Gu and Sellke disproved GCMC in dimension one and therefore also disproved the one-dimensional McKean and entropy power conjectures \cite{GuSellke2026}.  The present paper supplies the complementary multidimensional disproof.  Thus the conjectural hierarchy \eqref{eq:hierarchy} is now false in every dimension, while Fisher-information log-convexity survives only in dimension one.  All quantities in this paper are computed with the convention \eqref{eq:heat}; replacing it by $\partial_t f_t=\Delta f_t$ only rescales time and does not affect the sign of \eqref{eq:log-convexity}.

Our main result shows that the stronger inequality \eqref{eq:log-convexity} is false already in dimension two.

\begin{theorem}[A two-dimensional Euclidean counterexample]\label{thm:main}
There exists a smooth, strictly positive, Gaussian-decaying probability density $f$ on $\R^2$ such that
\begin{equation}\label{eq:counter-ineq}
        \Ical[f]\Dcal[f]-\Qcal[f]^2<0.
\end{equation}
If $f_t=P_t f$ and
\[
        \Phi_f(t):=I(t)I''(t)-I'(t)^2,
\]
then $\Phi_f(0)<0$.  Since $\Phi_f$ is continuous at $t=0$, it follows that
\[
        I(t)I''(t)-I'(t)^2<0
\]
for all sufficiently small $t>0$.  In particular, Fisher information along the heat flow is not log-convex in dimension two.
\end{theorem}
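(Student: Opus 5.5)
The plan follows the abstract: build a counterexample on the triangular torus $\T^2=\R^2/\Lambda$, transfer it to $\R^2$, and read off the heat-flow statement from the static identities recorded in Section \ref{sec:heat-flow}.

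\textbf{Step 1: torus counterexample.} On the triangular torus with unit normalized measure, take $u=\eps\,\phi$ with $\phi=\sum_{j=1}^3\cos(k_j\cdot x)$. Here $k_1,k_2,k_3$ are the three shortest dual lattice vectors with $k_1+k_2+k_3=0$, giving a hexagonal wave. Set $f=e^{u}/Z$ and expand $\Ical,\Qcal,\Dcal$ in powers of $\eps$. Since $\Hmat=-\eps\nabla^2\phi$, the quantities $\Ical$ and $\Qcal$ are $O(\eps^2)$ at leading order, and so is $\int|\nabla\Hmat|^2f$. For instance, $\Ical=\eps^2\int|\nabla\phi|^2+O(\eps^3)$.

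The obstruction is the quantity $\Ical\Dcal-\Qcal^2$. At order $\eps^4$ it equals $\bigl(\int|\nabla\phi|^2\bigr)\bigl(\int|\nabla^3\phi|^2\bigr)-\bigl(\int|\nabla^2\phi|^2\bigr)^2$. For a single Fourier shell $|k|=\kappa$ this is exactly $0$, because each integral scales as $\kappa^2,\kappa^6,\kappa^4$. So the sign is decided at the next order.

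\textbf{Step 2: the cubic resonance.} For the hexagonal wave, $\int\phi^3\ne0$, thanks to the resonance $k_1+k_2+k_3=0$. The cubic term $2\int\tr(\Hmat^3)f$, together with the $\eps^3$ corrections from the weight $e^{\eps\phi}$, produces a term of order $\eps^5$ in the defect. Its sign flips with $\eps\mapsto-\eps$. Choosing the sign of $\eps$ appropriately then makes the defect $-c\,\eps^5+O(\eps^6)$ with $c>0$. The $O(\eps^6)$ remainder should be harmless, since we can compare with $c|\eps|^5$ for small $\eps$.

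I will compute these coefficients explicitly using the identities $\nabla^2\cos(k\cdot x)=-kk^T\cos(k\cdot x)$ and $\tr\bigl((k_ik_i^T)(k_jk_j^T)(k_lk_l^T)\bigr)=(k_i\cdot k_j)(k_j\cdot k_l)(k_l\cdot k_i)$. Verifying $c\neq0$ is the main obstacle. If the pure shell gives $c=0$, I would add a second harmonic $\delta\cos(2k_j\cdot x)$ and optimize, falling back on the explicit numerics mentioned in the abstract.

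\textbf{Step 3: transfer to $\R^2$.} Take $f_R(x)=C_R\,e^{u(x)}e^{-|x|^2/(2R^2)}$, where $u$ is extended periodically. Then $\Hmat_{f_R}=\Hmat_u+R^{-2}\mathrm{Id}$, and $\nabla\Hmat_{f_R}=\nabla\Hmat_u$. Averaging a periodic function against a wide Gaussian converges to its torus average as $R\to\infty$, with errors $O(R^{-2})$. Hence each of $\Ical,\Qcal,\Dcal$ for $f_R$ converges to its torus value, and the defect becomes $\text{(torus defect)}+O(R^{-2})$. Fixing $\eps$ first and then $R$ large gives \eqref{eq:counter-ineq}. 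The resulting density is smooth, strictly positive and Gaussian-decaying.

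\textbf{Step 4: heat flow.} Using the identities from Section \ref{sec:heat-flow} and \eqref{eq:intro-static-defect}, we get $\Phi_f(0)<0$. For continuity at $t=0$, note that $f_t$ stays in the Gaussian-decaying smooth class, and that $\Ical[f_t],\Qcal[f_t],\Dcal[f_t]$ are continuous in $t$ by dominated convergence, using log-derivative bounds for $P_tf$. Hence $\Phi_f(t)<0$ for all small $t>0$.
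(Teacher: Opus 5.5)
\textbf{Verdict: genuine gap.} Your route is the paper's route: the hexagonal wave $\phi=\sum_j\cos(k_j\cdot x)$ on the triangular torus, the family $f_\eps\propto e^{\eps\phi}$, the Gaussian envelope $e^{-|x|^2/(2R^2)}$, and continuity in $t$. Steps 1, 3 and 4 are sound. Your observations are also correct: the $\eps^4$ term vanishes on a single Fourier shell, and the $\eps^5$ term is odd in $\eps$.

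The gap is Step 2, which is the entire content of the theorem. You never show that the $\eps^5$ coefficient is nonzero, and you explicitly allow that it might vanish. The resonance $\av{\phi^3}\neq0$ is only a necessary condition: without a zero-sum triad every cubic average vanishes, but a triad does not force the total coefficient to be nonzero. Your fallback does not repair this. By your own Parseval remark, adding a second shell $\delta\cos(2k_j\cdot x)$ makes the $\eps^4$ coefficient strictly positive, of order $\delta^2$. That term competes with the $\eps^5$ term rather than helping it, so this is not a simple optimization. Numerics alone would not be a proof either.

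\textbf{What is missing is the explicit computation.} Since $\av{|\nabla\phi|^2}=\av{|\nabla^2\phi|^2}=\av{|\nabla\nabla^2\phi|^2}=\tfrac32$, the $\eps^5$ coefficient of $\Ical\Dcal-\Qcal^2$ is $\tfrac32(a_3+d_3-2q_3)$, where
\begin{itemize}
\item $a_3=\tfrac12\av{\phi^3}$,
\item $q_3=\av{\phi|\nabla^2\phi|^2}$,
\item $d_3=\av{\phi|\nabla\nabla^2\phi|^2}-2\av{\tr((\nabla^2\phi)^3)}$.
\end{itemize}
The paper evaluates these using $\av{\cos\theta_1\cos\theta_2\cos\theta_3}=\tfrac14$, $\av{\cos\theta_3\sin\theta_1\sin\theta_2}=-\tfrac14$, $k_i\cdot k_j=-\tfrac12$ for $i\neq j$, and your trace identity. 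It obtains $\av{\phi^3}=\tfrac32$, $\av{\phi|\nabla^2\phi|^2}=\tfrac38$, $\av{\phi|\nabla\nabla^2\phi|^2}=\tfrac3{16}$ and $\av{\tr((\nabla^2\phi)^3)}=\tfrac3{16}$.

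Hence $a_3-2q_3=\tfrac34-\tfrac34=0$ exactly: the weight corrections to $\Ical$ and $\Qcal$ cancel. The sign is decided solely by $d_3=\tfrac3{16}-\tfrac38=-\tfrac3{16}$, which is itself a difference of comparable terms. The result is $-\tfrac9{32}\eps^5+O(\eps^6)$, which is \eqref{eq:torus-product-negative}, and $\eps>0$ is the good sign. This cancellation pattern is exactly why nonvanishing cannot be read off from the resonance heuristic and has to be computed. Once you supply these averages, your argument closes and coincides with the paper's.
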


By tensorization with broad Gaussians we also obtain the higher-dimensional failure.

\begin{corollary}[Higher-dimensional failure by tensorization]\label{cor:high-d}
For every $d\ge2$, there exists a smooth, strictly positive, Gaussian-decaying probability density $f$ on $\R^d$ such that
\[
        \Ical[f]\Dcal[f]-\Qcal[f]^2<0.
\]
If $f_t=P_t f$, then the same continuity argument gives
\[
        I(t)I''(t)-I'(t)^2<0
\]
for all sufficiently small $t>0$.  Hence the corresponding heat flow violates \eqref{eq:log-convexity}.
\end{corollary}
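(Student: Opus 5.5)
We take the density $g$ on $\R^2$ from Theorem~\ref{thm:main}, which satisfies $\Ical[g]\Dcal[g]-\Qcal[g]^2<0$, and form the product $f(x,y)=g(x)\,\gamma_\sigma(y)$ for $x\in\R^2$, $y\in\R^{d-2}$. Here $\gamma_\sigma$ is the centered Gaussian density on $\R^{d-2}$ with covariance $\sigma^2 I_{d-2}$. The product $f$ is smooth, strictly positive, Gaussian-decaying and a probability density. Writing $f=e^u$ gives $u=u_g(x)-|y|^2/(2\sigma^2)+\mathrm{const}$. Hence $\Hmat_f$ is block diagonal, $\Hmat_f=\Hmat_g(x)\oplus\sigma^{-2}I_{d-2}$. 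The Gaussian block is constant, so it contributes nothing to $\nabla\Hmat_f$, and $\nabla_y$ of the $g$-block vanishes. All the integrands therefore split additively. Since $f$ has unit mass in $y$, we get
\[
\Ical[f]=\Ical[g]+\frac{d-2}{\sigma^2},\qquad
\Qcal[f]=\Qcal[g]+\frac{d-2}{\sigma^4},\qquad
\Dcal[f]=\Dcal[g]+\frac{2(d-2)}{\sigma^6}.
\]

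Put $a=d-2$ and $\eps=\sigma^{-2}$. Expanding,
\[
\Ical[f]\Dcal[f]-\Qcal[f]^2
=\big(\Ical[g]\Dcal[g]-\Qcal[g]^2\big)
+a\eps\,\Dcal[g]+2a\eps^3\Ical[g]-2a\eps^2\Qcal[g]+a^2\eps^4 .
\]
The last four terms are $O(\eps)$ as $\eps\to0$, with constants depending only on $g$ and $d$. The first term is a fixed negative number. Taking $\sigma$ large enough ("broad Gaussians") makes the whole expression negative. Equivalently, this is the statement that a broad Gaussian factor carries vanishing curvature. The case $d=2$ is Theorem~\ref{thm:main} itself.

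For the heat-flow statement we argue as in Theorem~\ref{thm:main}, using the identities from Section~\ref{sec:heat-flow}: $\Phi_f(0)=\Ical[f]\Dcal[f]-\Qcal[f]^2<0$. Continuity of $t\mapsto\Phi_f(t)$ at $0$ gives negativity for small $t>0$. Continuity holds because $f$ is smooth and Gaussian-decaying, so the derivatives of $\log f_t$ up to third order have polynomial growth uniformly in small $t$, and dominated convergence applies. Alternatively, $P_t f=(P_tg)\otimes\gamma_{\sqrt{\sigma^2+t}}$, so the decomposition above holds at every $t$ with $\sigma^2$ replaced by $\sigma^2+t$. This reduces continuity directly to the two-dimensional case already handled in Theorem~\ref{thm:main}.

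The only step needing care is checking that the cross terms vanish in $|\nabla\Hmat_f|^2$ and $\tr(\Hmat_f^3)$. This follows from the block-diagonal structure and the independence of the blocks on the opposite variables, so there is no real obstacle.
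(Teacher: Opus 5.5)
Your proposal is correct and follows the paper's proof: tensorize the two-dimensional counterexample with a broad Gaussian on $\R^{d-2}$, use additivity of $\Ical,\Qcal,\Dcal$, and let $\sigma\to\infty$. Where the paper passes to the limit of the ratio $\Ical\Dcal/\Qcal^2$, you expand the defect directly, which amounts to the same argument; your observation that $P_tf=(P_tg)\otimes\gamma_{\sqrt{\sigma^2+t}}$ is a clean way to reduce the continuity claim to the two-dimensional case.
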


The implication chain \eqref{eq:hierarchy} then gives the following immediate consequence.

\begin{corollary}[Failure of GCMC, McKean, and the entropy power conjecture]\label{cor:hierarchy-false}
For every $d\ge2$, the multidimensional Gaussian completely monotone conjecture, McKean's Gaussian optimality conjecture, and Toscani's entropy power conjecture are false.
\end{corollary}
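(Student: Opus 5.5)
The corollary follows by chaining Corollary~\ref{cor:high-d} with the implication hierarchy \eqref{eq:hierarchy}. No new analysis is needed; the work is in checking that each implication holds in dimension $d$ for the class of densities produced there, and then taking contrapositives. Fix $d\ge2$ and let $f$ be the smooth, strictly positive, Gaussian-decaying density from Corollary~\ref{cor:high-d}. Let $X\sim f$ and $I(t)=J(X+\sqrt t Z)$. Then $I(t)I''(t)-I'(t)^2<0$ for all small $t>0$, so $I$ is not log-convex on $(0,\infty)$.

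\textbf{Step 1: GCMC fails.} Suppose the $d$-dimensional GCMC held for this $X$, i.e.\ $(-1)^k I^{(k)}(t)\ge0$ for all $k$ and $t>0$. By Bernstein's theorem, $I(t)=\int_{[0,\infty)}e^{-st}\,\mu(\dd s)$ for some positive measure $\mu$. Cauchy--Schwarz applied to $\int s e^{-st}\dd\mu$ gives
\[
\Big(\int s e^{-st}\dd\mu\Big)^2\le \int e^{-st}\dd\mu\int s^2e^{-st}\dd\mu ,
\]
which is exactly $I'(t)^2\le I(t)I''(t)$. This contradicts Corollary~\ref{cor:high-d}. The hypotheses of GCMC (finite second moment, smooth positive density) are met because $f$ has Gaussian decay, which makes all the derivatives finite.

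\textbf{Step 2: McKean's conjecture and the entropy power conjecture fail.} I would invoke the remaining arrows of \eqref{eq:hierarchy}. McKean's Gaussian optimality conjecture, in its standard $d$-dimensional form, says that among variables with the same variance the Gaussian minimizes $(-1)^k I^{(k)}(t)$ for every $k$. I need the fact that McKean implies GCMC in dimension $d$, as in the usual argument cited with the hierarchy. The Gaussian's Fisher information $d/(\sigma^2+t)$ has derivatives of alternating sign, so under McKean each $(-1)^kI^{(k)}(t)$ would be bounded below by a nonnegative quantity, giving GCMC. Step 1 therefore refutes McKean. Wang's theorem \cite{Wang2024}, that Toscani's entropy power conjecture implies McKean's, then refutes the entropy power conjecture by contrapositive.

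\textbf{Main obstacle.} The only delicate point is confirming that the implications in \eqref{eq:hierarchy}, especially Wang's, are proved or stated for general $d$ and for a class of densities containing our Gaussian-decaying $f$. I would check this against the cited formulations and, if necessary, note that regularity hypotheses such as smoothness, positivity and finite moments of all orders hold for the tensorized construction.
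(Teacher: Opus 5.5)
Your proposal is correct and follows the paper's argument, which simply says the corollary is immediate from Corollary~\ref{cor:high-d} together with the implication chain \eqref{eq:hierarchy}. You fill in details the paper leaves implicit: Bernstein's theorem plus Cauchy--Schwarz for GCMC $\Rightarrow$ log-convexity, comparison with the completely monotone Gaussian profile for McKean $\Rightarrow$ GCMC, and Wang's theorem \cite{Wang2024} for the last arrow. The paper also does not address your caveat about whether the cited implications are established in dimension $d$.
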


This is immediate from Corollary \ref{cor:high-d} and the implication chain \eqref{eq:hierarchy}.  Combined with the one-dimensional counterexample of Gu and Sellke \cite{GuSellke2026}, Corollary \ref{cor:hierarchy-false} shows that the Gaussian completely monotone, McKean, and entropy power conjectures fail in every dimension.  Combined with the one-dimensional theorem of Ledoux, Nair, and Wang \cite{LedouxNairWang}, Theorem \ref{thm:main} shows that Fisher-information log-convexity holds exactly in dimension one.

The construction is explicit and perturbative.  On the triangular torus we work with a closed frequency triad of unit vectors
\[
        k_1+k_2+k_3=0,
        \qquad
        |k_1|=|k_2|=|k_3|=1.
\]
We call this zero-sum frequency relation a cubic frequency resonance: it is precisely the condition that a product of three Fourier modes can contain a zero Fourier mode.  A single unit-frequency mode is neutral at quadratic order, whereas the hexagonal three-wave interaction produces a nonzero cubic correction with the sign needed to violate \eqref{eq:log-convexity}.  We then place a Gaussian envelope on the torus profile to transfer this negative sign to a smooth Gaussian-decaying density on $\R^2$.

This does not contradict the square-root convexity inequality of Liu and Gao \eqref{eq:liu-gao}.  For the perturbative family constructed below,
\[
        \frac{\Ical[f]\Dcal[f]}{\Qcal[f]^2}=1-\frac18\eps+O(\eps^2),
\]
so the ratio remains strictly larger than $1/2$ for sufficiently small $\eps>0$.  This naturally leads to the sharp constant
\begin{equation}\label{eq:theta-star}
        \theta_d^*=\inf_{f\in\mathcal A_d}\frac{\Ical[f]\Dcal[f]}{\Qcal[f]^2}
\end{equation}
for $d\ge1$, where the admissible class $\mathcal A_d$ is specified in Section \ref{sec:open-problems}.  The present result shows $\theta_d^*<1$ for all $d\ge2$, while \cite{LiuGao2023} gives $\theta_2^*\ge1/2$.

Section \ref{sec:heat-flow} records the heat-flow identities behind the static formulation.  Sections \ref{sec:torus} and \ref{sec:transfer} prove Theorem \ref{thm:main} by analyzing the hexagonal torus perturbation and then transferring it to $\R^2$.  Section \ref{sec:numerics} records explicit numerical values for the two-dimensional family.  Section \ref{sec:high-d} gives two higher-dimensional extensions, one by tensorization and one by a frequency-lattice generalization of the hexagonal zero-mode mechanism in $\R^d$.  Section \ref{sec:open-problems} turns to the sharp constants $\theta_d^*$ and to the dichotomy tied to Fisher convexity, and the brief concluding section summarizes the remaining directions.

\section{Heat-flow Identities}\label{sec:heat-flow}

For a smooth positive probability density $f=e^u$ on $\R^d$ with sufficient decay, define
\begin{equation}\label{eq:Hmat}
        \Hmat_f=-\nabla^2u.
\end{equation}
We use the following descriptive notation:
\begin{equation}\label{eq:I-def}
        \Ical[f]=\int_{\R^d}\tr(\Hmat_f)f\dd x,
\end{equation}
\begin{equation}\label{eq:Q-def}
        \Qcal[f]=\int_{\R^d}\tr(\Hmat_f^2)f\dd x,
\end{equation}
\begin{equation}\label{eq:D-def}
        \Dcal[f]=\int_{\R^d}\Big(|\nabla\Hmat_f|^2+2\tr(\Hmat_f^3)\Big)f\dd x.
\end{equation}
Here
\[
        |\nabla\Hmat_f|^2=\sum_{i,j,k}(\partial_k(\Hmat_f)_{ij})^2.
\]
The letter $\Ical$ is used because $\Ical[f]$ is exactly Fisher information.  The symbols $\Qcal$ and $\Dcal$ denote, respectively, the quadratic logarithmic-Hessian energy and the second Fisher-information production functional.

\begin{lemma}[Heat-flow identities for Fisher information]\label{lem:identities}
Let $f_t$ solve \eqref{eq:heat}.  Then
\begin{equation}\label{eq:I=Ical}
        I(t)=\Ical[f_t],
\end{equation}
\begin{equation}\label{eq:Iprime}
        I'(t)=-\Qcal[f_t],
\end{equation}
and
\begin{equation}\label{eq:Isecond}
        I''(t)=\Dcal[f_t].
\end{equation}
Consequently,
\begin{equation}\label{eq:log-conv-equivalence}
        I(t)I''(t)-I'(t)^2
        =\Ical[f_t]\Dcal[f_t]-\Qcal[f_t]^2.
\end{equation}
\end{lemma}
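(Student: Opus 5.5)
We work with $u_t=\log f_t$. The heat equation $\partial_t f=\tfrac12\Delta f$ becomes
\[
\partial_t u=\tfrac12\Delta u+\tfrac12|\nabla u|^2 .
\]
All integrations by parts are justified by the smoothness and Gaussian-type decay of $f_t$ for $t>0$. For $t=0$ we use the decay assumed on $f$. Boundary terms vanish because every integrand is a polynomial in derivatives of $u$ times $f$, and these are integrable.

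\textbf{Step 1: identity \eqref{eq:I=Ical}.} Write $\Delta f=f(\Delta u+|\nabla u|^2)$ and use $\int\Delta f=0$. This gives $\int |\nabla u|^2 f=-\int \Delta u\,f=\int\tr(\Hmat_f)f$, which is \eqref{eq:I=Ical}.

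\textbf{Step 2: identity \eqref{eq:Iprime}.} We differentiate $I=-\int \Delta u\, f$. The derivative is $-\int \Delta(\partial_t u) f-\int \Delta u\,\partial_t f$. Move the Laplacian onto $f$ in the first term and substitute the evolution equations. Then apply the Bochner identity
\[
\tfrac12\Delta|\nabla u|^2=|\nabla^2u|^2+\nabla u\cdot\nabla\Delta u ,
\]
and integrate by parts once more using $\nabla f=f\nabla u$. The terms containing $\nabla\Delta u$ cancel, leaving $-\int|\nabla^2u|^2f=-\Qcal[f_t]$. A cleaner route is the $\Gamma$-calculus for the generator $L=\tfrac12\Delta+\nabla u\cdot\nabla$, which is reversible with respect to $f$. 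In that language $\partial_t\int\Gamma(u)f$ reduces to $-\int\Gamma_2(u)f$ with $\Gamma_2=|\nabla^2 u|^2$, and we rely on that structure to organize the computation.

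\textbf{Step 3: identity \eqref{eq:Isecond}.} Differentiate $I'(t)=-\int \tr(\Hmat^2)f$. This gives
\[
I''=-\int 2\tr(\Hmat\,\partial_t\Hmat)f-\int\tr(\Hmat^2)\,\partial_t f .
\]
Next compute $\partial_t\Hmat=-\nabla^2\partial_tu$. Differentiating $\tfrac12|\nabla u|^2$ twice gives
\[
\nabla^2\tfrac12|\nabla u|^2=(\nabla^2u)^2+\sum_k u_k\,\partial_k\nabla^2 u .
\]
Hence
\[
\partial_t\Hmat=\tfrac12\Delta\Hmat-\Hmat^2+\nabla u\cdot\nabla\Hmat ,
\]
with the sign conventions to be checked. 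Substituting, the term $-\int 2\tr(\Hmat(-\Hmat^2))f$ produces $2\int\tr(\Hmat^3)f$. The terms $-\int\tr(\Hmat\Delta\Hmat)f$ and $-\int 2\tr(\Hmat\,\nabla u\cdot\nabla\Hmat)f$ combine with $-\tfrac12\int\tr(\Hmat^2)\Delta f$. To see this, integrate $\tfrac12\Delta$ by parts onto $\tr(\Hmat^2)$ and use
\[
\tfrac12\Delta\tr(\Hmat^2)=\tr(\Hmat\Delta\Hmat)+|\nabla\Hmat|^2 .
\]
The transport terms cancel against $\nabla f=f\nabla u$, and the result is $\int(|\nabla\Hmat|^2+2\tr(\Hmat^3))f=\Dcal[f_t]$. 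Identity \eqref{eq:log-conv-equivalence} then follows by substitution.

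\textbf{Main obstacle.} The delicate point is Step 3. The third-order transport terms $\nabla u\cdot\nabla\Hmat$ must cancel exactly, and all signs and factors of $\tfrac12$ under the convention \eqref{eq:heat} must come out right. I would first verify the bookkeeping on a Gaussian. For variance $\sigma^2+t$ in dimension $d$, $I=d/s$, $I'=-d/s^2$ and $I''=2d/s^3$ with $s=\sigma^2+t$. These match $\Qcal=d/s^2$ and $\Dcal=2\tr(\Hmat^3)=2d/s^3$, which fixes the constants. A secondary technical point is justifying differentiation under the integral and the vanishing of boundary terms. For that I would use standard bounds on the derivatives of $\log P_tf$ for Gaussian-decaying $f$.
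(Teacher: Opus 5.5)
Your proposal is correct and follows the same route as the paper: integration by parts for $I=\Ical[f_t]$, the evolution equation $\partial_t u=\tfrac12\Delta u+\tfrac12|\nabla u|^2$, the dissipation identity for $I'$ via Bochner, and one further differentiation for $I''$. You give considerably more detail than the paper's sketch; your formula $\partial_t\Hmat=\tfrac12\Delta\Hmat-\Hmat^2+\nabla u\cdot\nabla\Hmat$ has the right signs as written, and the transport term, once integrated by parts against $\nabla f=f\nabla u$, contributes exactly $2\int(\tr(\Hmat\Delta\Hmat)+|\nabla\Hmat|^2)f$, which is what leaves $\Dcal[f_t]$. The one inaccuracy is the $\Gamma$-calculus aside: $A=\tfrac12\Delta+\nabla u\cdot\nabla$ is not reversible with respect to $f$ (reversibility would require drift $\tfrac12\nabla u$); the fact actually at work is $\int (A\Psi)f=-\int\Psi\,\partial_t f$, hence $\frac{\dd}{\dd t}\int\Psi f=\int(\partial_t\Psi-A\Psi)f$, but your direct computations do not depend on that remark.
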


\begin{proof}
First,
\[
        \Ical[f]=\int -\Delta u\,e^u\dd x
        =\int |\nabla u|^2e^u\dd x=I[f]
\]
by integration by parts.  Under \eqref{eq:heat}, the logarithmic density satisfies
\begin{equation}\label{eq:u-evol}
        \partial_t u=\frac12\Delta u+\frac12|\nabla u|^2.
\end{equation}
The standard Fisher-information dissipation identity gives
\[
        \frac{\dd}{\dd t}I(t)=-\int\tr(\Hmat_{f_t}^2)f_t\dd x=-\Qcal[f_t].
\]
Differentiating once more, equivalently applying the Bochner identity to \eqref{eq:u-evol}, gives
\[
        \frac{\dd^2}{\dd t^2}I(t)
        =\int\Big(|\nabla\Hmat_{f_t}|^2+2\tr(\Hmat_{f_t}^3)\Big)f_t\dd x
        =\Dcal[f_t].
\]
This proves \eqref{eq:log-conv-equivalence}.
\end{proof}

\begin{remark}[Other heat conventions]
If one uses $\partial_s f_s=\Delta f_s$, then $s=t/2$ relative to \eqref{eq:heat}.  The identities become $\frac{\dd}{\dd s}I=-2\Qcal$ and $\frac{\dd^2}{\dd s^2}I=4\Dcal$.  The sign of $II''-(I')^2$ is unchanged.  This paper uses \eqref{eq:heat} everywhere to match the convention $X_t=X+\sqrt t Z$.
\end{remark}

\section{The Hexagonal Perturbation on the Triangular Torus}\label{sec:torus}

Consider the three unit vectors
\begin{equation}\label{eq:k-vectors}
        k_1=(1,0),\qquad
        k_2=\left(-\frac12,\frac{\sqrt3}{2}\right),\qquad
        k_3=\left(-\frac12,-\frac{\sqrt3}{2}\right).
\end{equation}
They satisfy
\begin{equation}\label{eq:k-relations}
        |k_1|=|k_2|=|k_3|=1,
        \qquad k_1+k_2+k_3=0.
\end{equation}
Introduce the dual vectors
\begin{equation}\label{eq:l-vectors}
        \ell_1=2\pi\left(1,\frac1{\sqrt3}\right),
        \qquad
        \ell_2=2\pi\left(0,\frac2{\sqrt3}\right).
\end{equation}
which satisfy
\begin{equation}\label{eq:dual-relations}
        k_i\cdot \ell_j=2\pi\delta_{ij},
        \qquad i,j\in\{1,2\}.
\end{equation}
Accordingly, define the lattice
\begin{equation}\label{eq:lattice-def}
        \Lambda
        =
        \Z \ell_1\oplus \Z \ell_2
        =
        \{x\in\R^2:k_1\cdot x\in2\pi\Z,\ k_2\cdot x\in2\pi\Z\}.
\end{equation}
This is the physical-space lattice dual to the frequency basis $(k_1,k_2)$.  By \eqref{eq:dual-relations}, the phases $k_1\cdot x$ and $k_2\cdot x$ are $2\pi$-periodic on the quotient, and the same is therefore true of $k_3\cdot x$ because $k_3=-k_1-k_2$.  Write
\[
        \T^2_\Lambda=\R^2/\Lambda,
        \qquad
        \Omega=\{s\ell_1+t\ell_2:0\le s,t<1\}.
\]
The set $\Omega$ is a fundamental parallelogram for $\Lambda$, with area
\[
        |\Omega|=|\det(\ell_1,\ell_2)|=\frac{8\pi^2}{\sqrt3}.
\]
For every $\Lambda$-periodic integrable function $G$, we write
\begin{equation}\label{eq:haar-average-def}
        \av{G}
        =
        \frac{1}{|\Omega|}\int_\Omega G(x)\dd x.
\end{equation}
Thus $\av{G}$ is the normalized Haar average on $\T^2_\Lambda$, equivalently the average of $G$ over any fundamental cell of $\Lambda$.  Figure \ref{fig:lattice} records the geometry of $\Omega$ and the lattice points.

\begin{figure}[H]
\centering
\begin{tikzpicture}[scale=0.5]
\coordinate (O) at (0,0);
\coordinate (Lone) at (2.8,1.6166);
\coordinate (Ltwo) at (0,3.2332);
\coordinate (Lsum) at (2.8,4.8498);
\fill[blue!8] (O)--(Lone)--(Lsum)--(Ltwo)--cycle;
\foreach \m in {-1,0,1,2}{
    \foreach \n in {-1,0,1,2}{
        \fill (\m*2.8,\m*1.6166+\n*3.2332) circle (1.4pt);
    }
}
\draw[thin, gray!60] (-2.8,-4.8498) -- (5.6,-0.0000);
\draw[thin, gray!60] (-2.8,-1.6166) -- (5.6,3.2332);
\draw[thin, gray!60] (-2.8,1.6166) -- (5.6,6.4664);
\draw[thin, gray!60] (0,-3.2332) -- (0,6.4664);
\draw[thin, gray!60] (2.8,-1.6166) -- (2.8,8.0830);
\draw[thick] (O)--(Lone)--(Lsum)--(Ltwo)--cycle;
\draw[->, thick] (O)--(Lone);
\draw[->, thick] (O)--(Ltwo);
\fill (O) circle (1.6pt);
\node[below left] at (O) {$0$};
\node[below right] at (1.55,0.65) {$\ell_1$};
\node[left] at (-0.08,1.7) {$\ell_2$};
\node at (1.15,2.2) {$\Omega$};
\end{tikzpicture}
\caption{A fundamental parallelogram $\Omega$ for the triangular lattice $\Lambda=\Z\ell_1\oplus\Z\ell_2$.  The torus $\T^2_\Lambda$ is obtained by identifying opposite sides.}
\label{fig:lattice}
\end{figure}

Set
\begin{equation}\label{eq:phi-def}
        \theta_j(x)=k_j\cdot x,
        \qquad
        \phi(x)=\cos\theta_1+\cos\theta_2+\cos\theta_3.
\end{equation}
For $\eps\in\R$, define
\begin{equation}\label{eq:feps-torus}
        f_\eps=Z_\eps^{-1}e^{\eps\phi},
        \qquad
        Z_\eps=\av{e^{\eps\phi}}.
\end{equation}
Its logarithm is
\[
        u_\eps=\log f_\eps=\eps\phi-\log Z_\eps,
\]
and hence
\begin{equation}\label{eq:Hmat-eps}
        \Hmat_{f_\eps}=-\eps\nabla^2\phi.
\end{equation}

\begin{lemma}[Hexagonal averages]\label{lem:averages}
For $\phi$ defined by \eqref{eq:phi-def},
\begin{align}
        \av{|\nabla\phi|^2}&=\frac32,&
        \av{\phi |\nabla\phi|^2}&=\frac34,\label{eq:avg1}\\
        \av{|\nabla^2\phi|^2}&=\frac32,&
        \av{\phi |\nabla^2\phi|^2}&=\frac38,\label{eq:avg2}\\
        \av{|\nabla\nabla^2\phi|^2}&=\frac32,&
        \av{\phi |\nabla\nabla^2\phi|^2}&=\frac3{16},\label{eq:avg3}\\
        \av{\tr((\nabla^2\phi)^3)}&=\frac3{16}.\label{eq:avg4}
\end{align}
\end{lemma}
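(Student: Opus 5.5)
Every quantity in the statement is a Haar average of a trigonometric polynomial built from $\cos\theta_j$ and $\sin\theta_j$. So the plan is to expand each integrand into Fourier modes $e^{i m\cdot x}$ with $m$ in the frequency lattice $\Z k_1\oplus\Z k_2$, and keep only the zero mode. By \eqref{eq:dual-relations}, $\av{e^{im\cdot x}}=\delta_{m,0}$ for such $m$. Two facts organize the whole computation:
\[
\av{\cos\theta_i\cos\theta_j}=\tfrac12\delta_{ij},\qquad \av{\cos\theta_1\cos\theta_2\cos\theta_3}=\tfrac14 .
\]
The second holds because among the eight sign combinations $\pm k_1\pm k_2\pm k_3$ only $k_1+k_2+k_3=0$ and its negative vanish. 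The remaining sign combinations, e.g. $k_1+k_2-k_3=-2k_3\neq0$, are nonzero. Likewise $\av{\cos\theta_i\sin\theta_j\sin\theta_k}$ for distinct $i,j,k$ equals $-\tfrac14$ when read off the same expansion: $\sin\theta_j\sin\theta_k=\tfrac12(\cos(\theta_j-\theta_k)-\cos(\theta_j+\theta_k))$ and $\theta_j+\theta_k=-\theta_i$. Any product involving a repeated index together with one other index, or an odd number of sines, averages to zero by parity or nonresonance.

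First I write the derivatives explicitly:
\begin{align*}
\nabla\phi&=-\sum_j \sin\theta_j\,k_j,\\
\nabla^2\phi&=-\sum_j\cos\theta_j\,k_j\otimes k_j,\\
\nabla\nabla^2\phi&=\sum_j\sin\theta_j\,k_j^{\otimes3}.
\end{align*}
The squared norms then become double sums with Gram weights $(k_i\cdot k_j)^p$, $p=1,2,3$. Here $k_i\cdot k_j=1$ if $i=j$ and $-\tfrac12$ otherwise. For example,
\[
|\nabla^2\phi|^2=\sum_{i,j}\cos\theta_i\cos\theta_j\,(k_i\cdot k_j)^2 .
\]
The quadratic averages follow immediately: only diagonal terms survive, giving $3\cdot\tfrac12=\tfrac32$ in each of the three cases.

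For the cubic averages $\av{\phi\,|\nabla^{r}\phi|^2}$, I multiply the double sum by $\cos\theta_l$. The only surviving terms are those with $i\neq j$ and $l$ the third index, and they carry the weight $(-\tfrac12)^{p}$. Summing over the six ordered pairs gives the following.
\begin{itemize}
\item For $r=1$: a factor $6\cdot(-\tfrac12)\cdot(-\tfrac14)=\tfrac34$.
\item For $r=2$: a factor $6\cdot\tfrac14\cdot\tfrac14=\tfrac38$.
\item For $r=3$: a factor $6\cdot(-\tfrac18)\cdot(-\tfrac14)=\tfrac3{16}$.
\end{itemize}
In the sine cases I use the value $-\tfrac14$ for $\av{\cos\theta_l\sin\theta_i\sin\theta_j}$, which is what makes these signs positive.

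For \eqref{eq:avg4} I write
\[
\tr((\nabla^2\phi)^3)=-\sum_{i,j,l}\cos\theta_i\cos\theta_j\cos\theta_l\,(k_i\cdot k_j)(k_j\cdot k_l)(k_l\cdot k_i).
\]
Only the permutations of $(1,2,3)$ contribute. Each has trace weight $(-\tfrac12)^3=-\tfrac18$ and average $\tfrac14$, so the total is $-6\cdot(-\tfrac18)\cdot\tfrac14=\tfrac3{16}$.

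The main obstacle is careful bookkeeping rather than any idea: getting the resonant products' signs right and confirming that no other index pattern resonates. The only resonant triples among $\pm k_j$ are $\pm(k_1+k_2+k_3)$, and pairs resonate only as $k_j-k_j$. I would double-check the sine-sine-cosine sign by the substitution $x\mapsto -x$ and the explicit identity above.
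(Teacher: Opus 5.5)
Your proposal is correct and follows essentially the same route as the paper. You write $\nabla^r\phi$ as a sum of $k_j^{\otimes r}$ with $\sin$/$\cos$ coefficients, contract with the Gram weights $(k_i\cdot k_j)^r$, and keep only zero Fourier modes; the single resonance $k_1+k_2+k_3=0$ then supplies the averages $\tfrac14$ and $-\tfrac14$. The one difference is that you compute $\av{\phi|\nabla\phi|^2}$ by the same bookkeeping, $6\cdot(-\tfrac12)\cdot(-\tfrac14)=\tfrac34$, whereas the paper integrates by parts and uses $-\Delta\phi=\phi$ to reduce it to $\tfrac12\av{\phi^3}$; both work.
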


\begin{proof}
Write
\[
        x=\frac{s}{2\pi}\ell_1+\frac{t}{2\pi}\ell_2,
        \qquad 0\le s,t<2\pi.
\]
By \eqref{eq:dual-relations}, this gives
\[
        \theta_1(x)=s,
        \qquad
        \theta_2(x)=t,
        \qquad
        \theta_3(x)=-(s+t).
\]
Since $\dd x=\frac{|\Omega|}{4\pi^2}\dd s\,\dd t$, the normalized average over $\Omega$ becomes
\begin{equation}\label{eq:torus-average-st}
        \av{F(\theta_1,\theta_2)}
        =
        \frac1{4\pi^2}\int_0^{2\pi}\int_0^{2\pi}F(s,t)\dd s\,\dd t.
\end{equation}
In particular, for every $(m,n)\in\Z^2$,
\begin{equation}\label{eq:torus-orthogonality}
        \av{e^{i(m\theta_1+n\theta_2)}}=
        \begin{cases}
                1,&(m,n)=(0,0),\\
                0,&(m,n)\ne(0,0).
        \end{cases}
\end{equation}
Thus every nonconstant trigonometric mode in $(\theta_1,\theta_2)$ has zero average.

We first compute the quadratic quantities.  Since
\[
        \nabla\phi=-\sum_{j=1}^3\sin\theta_j\,k_j,
\]
we obtain
\[
        |\nabla\phi|^2
        =
        \sum_{j=1}^3\sin^2\theta_j\,|k_j|^2
        +
        2\sum_{1\le i<j\le3}(k_i\cdot k_j)\sin\theta_i\sin\theta_j.
\]
Here $|k_j|=1$ and, for $i\ne j$, $k_i\cdot k_j=-\frac12$.  Moreover,
\[
        \av{\sin^2\theta_j}=\frac12,
\]
and for $i\ne j$,
\[
        \sin\theta_i\sin\theta_j
        =
        \frac12\bigl(\cos(\theta_i-\theta_j)-\cos(\theta_i+\theta_j)\bigr)
\]
has zero average by \eqref{eq:torus-orthogonality}.  Therefore
\[
        \av{|\nabla\phi|^2}
        =
        \sum_{j=1}^3\av{\sin^2\theta_j}
        =
        \frac32.
\]

Next set $M_j=k_jk_j^T$.  Then
\[
        \nabla^2\phi=-\sum_{j=1}^3\cos\theta_j\,M_j.
\]
Since $M_j:M_j=\tr(M_j^2)=|k_j|^4=1$, and for $i\ne j$,
\[
        M_i:M_j
        =
        \tr(k_ik_i^Tk_jk_j^T)
        =
        (k_i\cdot k_j)^2
        =
        \frac14,
\]
we have
\[
        |\nabla^2\phi|^2
        =
        \sum_{j=1}^3\cos^2\theta_j
        +
        2\sum_{1\le i<j\le3}(M_i:M_j)\cos\theta_i\cos\theta_j.
\]
Again \(\av{\cos^2\theta_j}=\frac12\), while for $i\ne j$,
\[
        \cos\theta_i\cos\theta_j
        =
        \frac12\bigl(\cos(\theta_i-\theta_j)+\cos(\theta_i+\theta_j)\bigr)
\]
has zero average.  Hence
\[
        \av{|\nabla^2\phi|^2}=\frac32.
\]

For the third derivative tensor,
\[
        \nabla\nabla^2\phi
        =
        \sum_{j=1}^3\sin\theta_j\,k_j^{\otimes3}.
\]
Now
\[
        k_j^{\otimes3}:k_j^{\otimes3}=|k_j|^6=1,
\]
and for $i\ne j$,
\[
        k_i^{\otimes3}:k_j^{\otimes3}
        =
        (k_i\cdot k_j)^3
        =
        -\frac18.
\]
Therefore
\[
        |\nabla\nabla^2\phi|^2
        =
        \sum_{j=1}^3\sin^2\theta_j
        +
        2\sum_{1\le i<j\le3}
        \bigl(k_i^{\otimes3}:k_j^{\otimes3}\bigr)\sin\theta_i\sin\theta_j,
\]
and the same orthogonality argument shows that all cross terms average to zero.  Thus
\[
        \av{|\nabla\nabla^2\phi|^2}=\frac32.
\]

We next turn to the cubic averages.  Since $\theta_3=-(\theta_1+\theta_2)$, we have
\[
        \cos\theta_1\cos\theta_2\cos\theta_3
        =
        \cos s\cos t\cos(s+t).
\]
Using $\cos s\cos t=\frac12(\cos(s+t)+\cos(s-t))$, we obtain
\[
        \cos s\cos t\cos(s+t)
        =
        \frac14\bigl(1+\cos2s+\cos2t+\cos2(s+t)\bigr).
\]
By \eqref{eq:torus-average-st}, all nonconstant terms average to zero, so
\begin{equation}\label{eq:cos123-average}
        \av{\cos\theta_1\cos\theta_2\cos\theta_3}=\frac14.
\end{equation}

Expanding $\phi^3=(\cos\theta_1+\cos\theta_2+\cos\theta_3)^3$, we find
\[
        \phi^3
        =
        \sum_{j=1}^3\cos^3\theta_j
        +
        3\sum_{i\ne j}\cos^2\theta_i\cos\theta_j
        +
        6\cos\theta_1\cos\theta_2\cos\theta_3.
\]
The first two groups have zero average: $\cos^3\theta_j$ contains only the modes $\cos\theta_j$ and $\cos3\theta_j$, while
\[
        \cos^2\theta_i\cos\theta_j
        =
        \frac12\cos\theta_j+\frac12\cos2\theta_i\cos\theta_j
\]
also has only nonconstant Fourier modes.  Hence, by \eqref{eq:cos123-average},
\begin{equation}\label{eq:phi3}
        \av{\phi^3}
        =
        6\av{\cos\theta_1\cos\theta_2\cos\theta_3}
        =
        \frac32.
\end{equation}

To compute $\av{\phi|\nabla\phi|^2}$, we use periodic integration by parts on $\T^2_\Lambda$:
\[
        \av{\phi|\nabla\phi|^2}
        =
        \frac12\av{\nabla(\phi^2)\cdot\nabla\phi}
        =
        -\frac12\av{\phi^2\Delta\phi}.
\]
Since each mode in $\phi$ has frequency norm $1$, we have $-\Delta\phi=\phi$.  Therefore
\[
        \av{\phi|\nabla\phi|^2}
        =
        \frac12\av{\phi^3}
        =
        \frac34.
\]

For $\av{\phi|\nabla^2\phi|^2}$, insert the expansion of $|\nabla^2\phi|^2$:
\[
        \phi|\nabla^2\phi|^2
        =
        \sum_{\ell=1}^3\sum_{j=1}^3
        \cos\theta_\ell\cos^2\theta_j
        +
        2\sum_{\ell=1}^3\sum_{1\le i<j\le3}
        (M_i:M_j)\cos\theta_\ell\cos\theta_i\cos\theta_j.
\]
Every term in the first double sum has zero average, by the same argument used for $\phi^3$.  In the second sum, if $\ell=i$ or $\ell=j$, we again get a term of the form $\cos^2\theta_i\cos\theta_j$, whose average is zero.  Thus only the terms with $\ell,i,j$ pairwise distinct survive.  There are exactly six such ordered contributions, each equal to
\[
        (M_i:M_j)\av{\cos\theta_1\cos\theta_2\cos\theta_3}
        =
        \frac14\cdot\frac14.
\]
Hence
\[
        \av{\phi|\nabla^2\phi|^2}
        =
        6\cdot\frac14\cdot\frac14
        =
        \frac38.
\]

For $\av{\phi|\nabla\nabla^2\phi|^2}$, write
\[
        \phi|\nabla\nabla^2\phi|^2
        =
        \sum_{\ell=1}^3\sum_{j=1}^3
        \cos\theta_\ell\sin^2\theta_j
        +
        2\sum_{\ell=1}^3\sum_{1\le i<j\le3}
        \bigl(k_i^{\otimes3}:k_j^{\otimes3}\bigr)
        \cos\theta_\ell\sin\theta_i\sin\theta_j.
\]
As before, the first double sum has zero average.  In the second sum, the cases $\ell=i$ or $\ell=j$ also vanish by orthogonality, so only pairwise distinct indices remain.  For such a triple, say $(i,j,\ell)=(1,2,3)$, we use $\theta_3=-(\theta_1+\theta_2)$ and obtain
\[
        \cos\theta_3\sin\theta_1\sin\theta_2
        =
        \cos(\theta_1+\theta_2)\sin\theta_1\sin\theta_2
        =
        \frac14\bigl(\cos2\theta_1+\cos2\theta_2-1-\cos2(\theta_1+\theta_2)\bigr).
\]
Therefore
\[
        \av{\cos\theta_\ell\sin\theta_i\sin\theta_j}=-\frac14
\]
whenever $i,j,\ell$ are distinct.  Since $k_i^{\otimes3}:k_j^{\otimes3}=-\frac18$ for $i\ne j$, it follows that
\[
        \av{\phi|\nabla\nabla^2\phi|^2}
        =
        6\left(-\frac18\right)\left(-\frac14\right)
        =
        \frac3{16}.
\]

Finally,
\[
        \tr\bigl((\nabla^2\phi)^3\bigr)
        =
        -\sum_{i,j,\ell=1}^3
        \cos\theta_i\cos\theta_j\cos\theta_\ell\,\tr(M_iM_jM_\ell).
\]
Since
\[
        M_iM_j=(k_i\cdot k_j)\,k_ik_j^T,
\]
we get
\[
        \tr(M_iM_jM_\ell)
        =
        (k_i\cdot k_j)(k_j\cdot k_\ell)(k_\ell\cdot k_i).
\]
If $i=j=\ell$, then $\tr(M_i^3)=1$ but $\av{\cos^3\theta_i}=0$.  If exactly two indices coincide, say $i=j\ne\ell$, then $\tr(M_i^2M_\ell)=(k_i\cdot k_\ell)^2=\frac14$, but $\av{\cos^2\theta_i\cos\theta_\ell}=0$.  Thus again only the six terms with pairwise distinct indices contribute.  For each of them,
\[
        \tr(M_iM_jM_\ell)
        =
        \left(-\frac12\right)^3
        =
        -\frac18,
\]
and \eqref{eq:cos123-average} gives the average factor $\frac14$.  Hence
\[
        \av{\tr((\nabla^2\phi)^3)}
        =
        -6\left(-\frac18\right)\left(\frac14\right)
        =
        \frac3{16}.
\]
\end{proof}

\begin{proposition}[Torus counterexample]\label{prop:torus}
For $f_\eps$ defined by \eqref{eq:feps-torus},
\begin{align}
        \Ical[f_\eps]
        &=\frac32\eps^2+\frac34\eps^3+O(\eps^4),\label{eq:Iexp}\\
        \Qcal[f_\eps]
        &=\frac32\eps^2+\frac38\eps^3+O(\eps^4),\label{eq:Qexp}\\
        \Dcal[f_\eps]
        &=\frac32\eps^2-\frac3{16}\eps^3+O(\eps^4).\label{eq:Dexp}
\end{align}
Consequently,
\begin{equation}\label{eq:torus-product-negative}
        \Ical[f_\eps]\Dcal[f_\eps]-\Qcal[f_\eps]^2
        =-\frac9{32}\eps^5+O(\eps^6),
\end{equation}
and
\begin{equation}\label{eq:quotient-torus}
        \frac{\Ical[f_\eps]\Dcal[f_\eps]}{\Qcal[f_\eps]^2}
        =1-\frac18\eps+O(\eps^2).
\end{equation}
Thus the torus form of the log-convexity inequality fails for all sufficiently small $\eps>0$.
\end{proposition}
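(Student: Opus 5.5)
The plan is to expand each of the three functionals in powers of $\eps$, with all integrals over $\R^2$ replaced by the normalized Haar average $\av{\cdot}$ on $\T^2_\Lambda$; this average is the torus form of \eqref{eq:I-def}--\eqref{eq:D-def}. By \eqref{eq:Hmat-eps}, $\Hmat_{f_\eps}=-\eps\nabla^2\phi$. Hence every integrand is a fixed trigonometric polynomial in $\phi$ and its derivatives, multiplied by an explicit power of $\eps$ and by the weight $f_\eps$. Only the weight needs expanding.

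\textbf{Step 1: expand the weight.} Since $\phi$ is bounded and $\av{\phi}=0$ by \eqref{eq:torus-orthogonality}, we have $Z_\eps=1+O(\eps^2)$. This gives
\[
f_\eps=1+\eps\phi+O(\eps^2)
\]
uniformly on the torus, with the $O(\eps^2)$ term a bounded smooth function.

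\textbf{Step 2: expand $\Ical$ and $\Qcal$.} For $\Ical$, I use the integrated-by-parts form $\Ical[f_\eps]=\av{|\nabla u_\eps|^2f_\eps}$ rather than $\av{\tr(\Hmat)f}$. The first form has an explicit $\eps^2$ prefactor, so the weight is only needed to first order. Periodic integration by parts is valid on the torus exactly as in Lemma \ref{lem:identities}. This gives
\[
\Ical=\eps^2\av{|\nabla\phi|^2}+\eps^3\av{\phi|\nabla\phi|^2}+O(\eps^4).
\]
Similarly,
\[
\Qcal=\eps^2\av{|\nabla^2\phi|^2}+\eps^3\av{\phi|\nabla^2\phi|^2}+O(\eps^4).
\]
Inserting \eqref{eq:avg1} and \eqref{eq:avg2} yields \eqref{eq:Iexp} and \eqref{eq:Qexp}.

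\textbf{Step 3: expand $\Dcal$.} Here $|\nabla\Hmat|^2=\eps^2|\nabla\nabla^2\phi|^2$, which contributes $\frac32\eps^2+\frac3{16}\eps^3$ by \eqref{eq:avg3}. The cubic term is $2\tr(\Hmat^3)=-2\eps^3\tr((\nabla^2\phi)^3)$. Because it already carries $\eps^3$, it is needed only at weight $1$, and by \eqref{eq:avg4} it contributes $-\frac38\eps^3$. Summing gives $\frac32\eps^2-\frac3{16}\eps^3$, which is \eqref{eq:Dexp}.

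\textbf{Step 4: the product and the ratio.} Multiplying out, the $\eps^4$ terms of $\Ical\Dcal$ and $\Qcal^2$ are both $\frac94\eps^4$ and cancel. The $\eps^5$ coefficient is
\[
\tfrac32\bigl(-\tfrac3{16}\bigr)+\tfrac34\cdot\tfrac32-2\cdot\tfrac32\cdot\tfrac38=-\tfrac9{32}.
\]
This proves \eqref{eq:torus-product-negative}. For the ratio, I divide the leading coefficients and expand. The relative corrections are $\frac12\eps$ for $\Ical$, $-\frac18\eps$ for $\Dcal$, and $\frac14\eps$ for $\Qcal$, the last counted twice in $\Qcal^2$. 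So the ratio is
\[
1+\bigl(\tfrac12-\tfrac18-\tfrac12\bigr)\eps+O(\eps^2)=1-\tfrac18\eps+O(\eps^2),
\]
which is \eqref{eq:quotient-torus}.

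The only delicate point is bookkeeping. The remainders must genuinely be $O(\eps^4)$, and $\Ical$ must be evaluated in the gradient form so that the unknown $O(\eps^2)$ part of $f_\eps$ does not enter at order $\eps^3$. Both follow from the uniform boundedness of $\phi$ and its derivatives. The decisive input is the negative sign from the $\tr(\Hmat^3)$ term, already computed in Lemma \ref{lem:averages}.
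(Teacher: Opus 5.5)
Your proposal is correct and follows essentially the same route as the paper: you expand the weight $f_\eps=1+\eps\phi+O(\eps^2)$ uniformly, insert the averages of Lemma~\ref{lem:averages}, and read off the $\eps^5$ coefficient $a_2d_3+a_3d_2-2q_2q_3=-\frac9{32}$. The only variation is in $\Ical$. You use the gradient form $\eps^2\av{|\nabla\phi|^2f_\eps}$, which needs the weight only to first order and the second entry of \eqref{eq:avg1}, while the paper uses the trace form $\eps\av{\phi e^{\eps\phi}}/\av{e^{\eps\phi}}$ with $\av{\phi^2}=\av{\phi^3}=\frac32$; the two agree by the periodic integration by parts that gives $\av{\phi|\nabla\phi|^2}=\frac12\av{\phi^3}$ in the lemma.
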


\begin{proof}
Since $\tr(\Hmat_{f_\eps})=-\eps\Delta\phi=\eps\phi$, we have
\[
        \Ical[f_\eps]
        =\eps\frac{\av{\phi e^{\eps\phi}}}{\av{e^{\eps\phi}}}.
\]
Since $\av{\phi}=0$ and $\av{e^{\eps\phi}}=1+O(\eps^2)$, using \eqref{eq:phi3},
\[
        \Ical[f_\eps]
        =\eps\left(\eps\av{\phi^2}+\frac{\eps^2}{2}\av{\phi^3}\right)+O(\eps^4)
        =\frac32\eps^2+\frac34\eps^3+O(\eps^4).
\]

Next,
\[
        \tr(\Hmat_{f_\eps}^2)=\eps^2|\nabla^2\phi|^2,
\]
so Lemma \ref{lem:averages} gives
\[
        \Qcal[f_\eps]
        =\eps^2\frac{\av{|\nabla^2\phi|^2e^{\eps\phi}}}{\av{e^{\eps\phi}}}
        =\frac32\eps^2+\frac38\eps^3+O(\eps^4).
\]

Finally,
\[
        |\nabla\Hmat_{f_\eps}|^2=\eps^2|\nabla\nabla^2\phi|^2,
\]
and
\[
        \tr(\Hmat_{f_\eps}^3)=-\eps^3\tr((\nabla^2\phi)^3).
\]
Therefore
\[
\begin{aligned}
        \Dcal[f_\eps]
        & = \eps^2\av{|\nabla\nabla^2\phi|^2}
        +\eps^3\left(\av{\phi|\nabla\nabla^2\phi|^2}
        -2\av{\tr((\nabla^2\phi)^3)}\right)
        +O(\eps^4) \\
        & =\frac32\eps^2+\left(\frac3{16}-2\cdot\frac3{16}\right)\eps^3+O(\eps^4)\\
        & =\frac32\eps^2-\frac3{16}\eps^3+O(\eps^4).
\end{aligned}
\]
Write
\[
\Ical=a_2\eps^2+a_3\eps^3+O(\eps^4),\quad
\Qcal=q_2\eps^2+q_3\eps^3+O(\eps^4),\quad
\Dcal=d_2\eps^2+d_3\eps^3+O(\eps^4).
\]
Here
\[
        a_2=q_2=d_2=\frac32,
        \qquad
        a_3=\frac34,
        \qquad
        q_3=\frac38,
        \qquad
        d_3=-\frac3{16}.
\]
Thus the coefficient of $\eps^5$ in $\Ical\Dcal-\Qcal^2$ is
\[
        a_2d_3+a_3d_2-2q_2q_3
        =\frac32\left(-\frac3{16}\right)+\frac34\cdot\frac32-2\cdot\frac32\cdot\frac38
        =-\frac9{32}.
\]
This proves \eqref{eq:torus-product-negative}.  Dividing by $\Qcal[f_\eps]^2$ gives \eqref{eq:quotient-torus}.
\end{proof}

\section{Transfer to the Euclidean Space}\label{sec:transfer}

We next transfer the torus calculation to a smooth rapidly decaying counterexample on $\R^2$.
For $R>0$, define
\begin{equation}\label{eq:FR-def}
        F_{R,\eps}(x)=Z_{R,\eps}^{-1}\exp\left(\eps\phi(x)-\frac{|x|^2}{2R^2}\right),
        \qquad x\in\R^2.
\end{equation}
Then $F_{R,\eps}$ is smooth, strictly positive, and Gaussian-decaying.  Its logarithmic Hessian is
\begin{equation}\label{eq:HR-eps}
        \Hmat_{F_{R,\eps}}
        =-\eps\nabla^2\phi+R^{-2}I_2.
\end{equation}
Also
\begin{equation}\label{eq:gradHR}
        \nabla\Hmat_{F_{R,\eps}}
        =-\eps\nabla\nabla^2\phi,
\end{equation}
since the Gaussian Hessian shift $R^{-2}I_2$ is constant.

\begin{lemma}[Gaussian-envelope averaging]\label{lem:Gaussian-envelope}
Let $G$ be continuous and $\Lambda$-periodic.  Then, for each fixed $\eps$,
\begin{equation}\label{eq:Gaussian-average}
        \frac{\int_{\R^2}G(x)e^{\eps\phi(x)}e^{-|x|^2/(2R^2)}\dd x}
        {\int_{\R^2}e^{\eps\phi(x)}e^{-|x|^2/(2R^2)}\dd x}
        \longrightarrow
        \frac{\av{Ge^{\eps\phi}}}{\av{e^{\eps\phi}}}
        \qquad (R\to\infty).
\end{equation}
\end{lemma}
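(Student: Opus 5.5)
The plan is to reduce both numerator and denominator to a single statement: for every continuous $\Lambda$-periodic $H$,
\[
        \frac{\int_{\R^2}H(x)e^{-|x|^2/(2R^2)}\dd x}{\int_{\R^2}e^{-|x|^2/(2R^2)}\dd x}\longrightarrow \av{H}\qquad(R\to\infty).
\]
We then apply this with $H=Ge^{\eps\phi}$ and with $H=e^{\eps\phi}$, both continuous and $\Lambda$-periodic since $\phi$ is. Dividing the two limits gives \eqref{eq:Gaussian-average}, because the Gaussian normalization $2\pi R^2$ cancels and $\av{e^{\eps\phi}}\ge e^{-3|\eps|}>0$.

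To prove the displayed claim, we use density of trigonometric polynomials. By Stone--Weierstrass on the compact torus $\T^2_\Lambda$, or by Fej\'er means in the $(\theta_1,\theta_2)$ coordinates of Lemma \ref{lem:averages}, we can approximate $H$ uniformly within $\delta$ by a finite sum $\sum c_{m,n}e^{i(m\theta_1+n\theta_2)}$. Writing $w_R(x)=e^{-|x|^2/(2R^2)}/(2\pi R^2)$, which is a probability density, the functional $H\mapsto\int H w_R$ has operator norm $1$ in sup norm, and so does $\av{\cdot}$. The approximation error is therefore at most $2\delta$, uniformly in $R$. It then suffices to treat a single mode $e^{i\xi\cdot x}$ with $\xi=mk_1+nk_2$.

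For a single mode, the Gaussian Fourier transform gives
\[
        \int_{\R^2}e^{i\xi\cdot x}w_R(x)\dd x=e^{-R^2|\xi|^2/2}.
\]
This equals $1$ when $(m,n)=(0,0)$. Otherwise $\xi\neq0$, since $k_1,k_2$ are linearly independent, and the expression tends to $0$ as $R\to\infty$. This matches \eqref{eq:torus-orthogonality}. Letting $R\to\infty$ and then $\delta\to0$ proves the claim.

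The only step needing any care is the uniform approximation. It is standard, and the main point to check is that the error bound is uniform in $R$, which follows from $w_R$ being a probability density. An alternative that avoids Fourier analysis is to tile $\R^2$ by translates of $\Omega$. On each cell the Gaussian varies by a relative amount $O(\operatorname{diam}(\Omega)|x|/R^2)$, and the Gaussian mass concentrates at scale $R$, so the relative variation is $O(1/R)$ there. A Riemann-sum argument then gives the same limit, but the Fourier route is cleaner.
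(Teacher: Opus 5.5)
Your proposal is correct and follows essentially the same route as the paper. Both arguments reduce the lemma to the claim $\int_{\R^2}H(x)e^{-|x|^2/(2R^2)}\dd x=2\pi R^2\av{H}+o(R^2)$ for continuous $\Lambda$-periodic $H$, prove it by uniform trigonometric approximation (with an $R$-uniform error bound) together with the Gaussian Fourier transform of each mode, and then apply it to $H=Ge^{\eps\phi}$ and $H=e^{\eps\phi}$. The only cosmetic difference is that you use the nonvanishing of $mk_1+nk_2$, whereas the paper uses the quantitative gap $m^2+n^2-mn\ge1$, which the qualitative limit does not need.
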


\begin{proof}
For a continuous $\Lambda$-periodic function $H$, define
\[
        \mathcal G_R(H)
        :=
        \int_{\R^2}H(x)e^{-|x|^2/(2R^2)}\dd x.
\]
We first prove that
\begin{equation}\label{eq:gaussian-claim}
        \mathcal G_R(H)=2\pi R^2\av{H}+o(R^2)
        \qquad(R\to\infty)
\end{equation}
for every such $H$.  Once this is proved, the lemma follows immediately by applying \eqref{eq:gaussian-claim} to
\[
        H_1=Ge^{\eps\phi}
        \qquad\text{and}\qquad
        H_0=e^{\eps\phi},
\]
because $\av{H_0}=\av{e^{\eps\phi}}>0$.

To prove \eqref{eq:gaussian-claim}, fix $\eta>0$.  The trigonometric polynomials in
\[
        e^{i\theta_1(x)}
        \qquad\text{and}\qquad
        e^{i\theta_2(x)}
\]
are uniformly dense in $C(\T^2_\Lambda)$, so there exists a finite set
$\mathcal F_\eta\subset\Z^2$ and a finite Fourier sum
\[
        T_\eta(x)=\sum_{(m,n)\in\mathcal F_\eta}c_{m,n}e^{i(m\theta_1(x)+n\theta_2(x))}
\]
such that
\[
        \|H-T_\eta\|_{L^\infty(\T^2_\Lambda)}\le\eta.
\]
For each pair $(m,n)\in\Z^2$, we have
\[
        e^{i(m\theta_1(x)+n\theta_2(x))}
        =
        e^{i(mk_1+n k_2)\cdot x},
\]
and therefore the Gaussian Fourier transform gives
\[
        \mathcal G_R\!\left(e^{i(m\theta_1+n\theta_2)}\right)
        =
        2\pi R^2
        \exp\left(-\frac{R^2}{2}|mk_1+n k_2|^2\right).
\]
Since
\[
        |mk_1+n k_2|^2
        =
        m^2+n^2-mn
        =
        \frac12\bigl(m^2+(m-n)^2+n^2\bigr),
\]
we have $|mk_1+n k_2|^2\ge1$ for every $(m,n)\ne(0,0)$.  Hence
\[
        \mathcal G_R(T_\eta)
        =
        2\pi R^2 c_{0,0}
        +
        O_{T_\eta}(R^2e^{-R^2/2}),
\]
where the implied constant depends only on the finitely many coefficients of $T_\eta$.  Because $c_{0,0}=\av{T_\eta}$, this can be rewritten as
\begin{equation}\label{eq:gaussian-poly}
        \mathcal G_R(T_\eta)
        =
        2\pi R^2\av{T_\eta}
        +
        O_{T_\eta}(R^2e^{-R^2/2}).
\end{equation}

Now compare $H$ with $T_\eta$.  Since $\int_{\R^2}e^{-|x|^2/(2R^2)}\dd x=2\pi R^2$, we have
\[
        |\mathcal G_R(H-T_\eta)|
        \le
        \|H-T_\eta\|_{L^\infty}\int_{\R^2}e^{-|x|^2/(2R^2)}\dd x
        \le
        2\pi R^2\eta.
\]
Also,
\[
        |\av{H}-\av{T_\eta}|
        \le
        \av{|H-T_\eta|}
        \le
        \eta.
\]
Combining these bounds with \eqref{eq:gaussian-poly}, we obtain
\[
        \begin{aligned}
        |\mathcal G_R(H)-2\pi R^2\av{H}|
        &\le
        |\mathcal G_R(H-T_\eta)|
        +
        |\mathcal G_R(T_\eta)-2\pi R^2\av{T_\eta}|
        +
        2\pi R^2|\av{T_\eta}-\av{H}|\\
        &\le
        4\pi R^2\eta
        +
        O_{T_\eta}(R^2e^{-R^2/2}).
        \end{aligned}
\]
Divide by $R^2$ and let $R\to\infty$ to get
\[
        \limsup_{R\to\infty}\frac{|\mathcal G_R(H)-2\pi R^2\av{H}|}{R^2}
        \le
        4\pi\eta.
\]
Since $\eta>0$ is arbitrary, \eqref{eq:gaussian-claim} follows.

Applying \eqref{eq:gaussian-claim} to $H_1$ and $H_0$ yields
\[
        \int_{\R^2}G(x)e^{\eps\phi(x)}e^{-|x|^2/(2R^2)}\dd x
        =
        2\pi R^2\av{Ge^{\eps\phi}}+o(R^2),
\]
\[
        \int_{\R^2}e^{\eps\phi(x)}e^{-|x|^2/(2R^2)}\dd x
        =
        2\pi R^2\av{e^{\eps\phi}}+o(R^2).
\]
Taking the ratio proves \eqref{eq:Gaussian-average}.
\end{proof}

\begin{proposition}[Euclidean counterexample]\label{prop:euclidean}
For each sufficiently small $\eps>0$, there exists $R_0(\eps)<\infty$ such that for all $R\ge R_0(\eps)$,
\begin{equation}\label{eq:Euclidean-negative}
        \Ical[F_{R,\eps}]\Dcal[F_{R,\eps}]-\Qcal[F_{R,\eps}]^2<0.
\end{equation}
\end{proposition}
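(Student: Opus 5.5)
The plan is to show that, for fixed small $\eps>0$, each of the three functionals evaluated at $F_{R,\eps}$ converges as $R\to\infty$ to the corresponding torus quantity for $f_\eps$. Then the strict inequality of Proposition \ref{prop:torus} passes to large $R$ by continuity.

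Concretely, fix $\eps>0$ small enough that \eqref{eq:torus-product-negative} gives $\Ical[f_\eps]\Dcal[f_\eps]-\Qcal[f_\eps]^2=:-\delta_\eps<0$. Using \eqref{eq:HR-eps} and \eqref{eq:gradHR}, I expand the integrands as polynomials in $R^{-2}$ with $\Lambda$-periodic coefficients:
\[
\tr(\Hmat_{F_{R,\eps}})=\eps\phi+2R^{-2},\qquad
\tr(\Hmat_{F_{R,\eps}}^2)=\eps^2|\nabla^2\phi|^2+2\eps R^{-2}\phi+2R^{-4},
\]
where I used $\tr(-\nabla^2\phi)=\phi$. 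Similarly,
\[
|\nabla\Hmat_{F_{R,\eps}}|^2+2\tr(\Hmat_{F_{R,\eps}}^3)
=\eps^2|\nabla\nabla^2\phi|^2-2\eps^3\tr((\nabla^2\phi)^3)+6\eps^2R^{-2}|\nabla^2\phi|^2+6\eps R^{-4}\phi+4R^{-6}.
\]
Each coefficient is a continuous $\Lambda$-periodic function, and every correction carries a factor $R^{-2}$ times a bounded periodic function. Integrating against $F_{R,\eps}$ and applying Lemma \ref{lem:Gaussian-envelope} to each periodic coefficient $G$, the leading terms converge to $\av{Ge^{\eps\phi}}/\av{e^{\eps\phi}}$, which are exactly the torus integrands defining $\Ical[f_\eps],\Qcal[f_\eps],\Dcal[f_\eps]$. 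The correction terms are $O(R^{-2})$ uniformly, since $\phi$ and its derivatives are bounded and $F_{R,\eps}$ is a probability density. Hence $\Ical[F_{R,\eps}]\to\Ical[f_\eps]$, and likewise for $\Qcal$ and $\Dcal$.

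One step needs care. The functionals are defined as integrals of $\tr(\Hmat)f$ and so on, and these are the quantities appearing in Lemma \ref{lem:identities}. On $\R^2$ with Gaussian decay they are well defined and finite, because all integrands are bounded functions times a Gaussian. No integration by parts is needed here, since we work directly with the static definitions \eqref{eq:I-def}--\eqref{eq:D-def}. The torus quantities in Proposition \ref{prop:torus} were likewise computed directly from these definitions with the average $\av{\cdot}$.

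Finally, the map $(a,q,d)\mapsto ad-q^2$ is continuous, so $\Ical[F_{R,\eps}]\Dcal[F_{R,\eps}]-\Qcal[F_{R,\eps}]^2\to-\delta_\eps<0$. Choosing $R_0(\eps)$ so that the error is below $\delta_\eps/2$ for all $R\ge R_0(\eps)$ proves \eqref{eq:Euclidean-negative}. I expect the main obstacle to be bookkeeping rather than any conceptual difficulty: one has to verify that the constant Hessian shift $R^{-2}I_2$ contributes only $O(R^{-2})$ terms with bounded periodic coefficients. That is exactly what the polynomial expansion above makes explicit.
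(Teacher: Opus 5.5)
Your proposal is correct and follows essentially the same route as the paper. The paper also writes $\Hmat_{F_{R,\eps}}=-\eps\nabla^2\phi+R^{-2}I_2$ and expands the three trace integrands exactly into the torus integrands plus $R^{-2}$-weighted bounded periodic corrections. It then applies Lemma~\ref{lem:Gaussian-envelope} term by term and concludes by continuity of $(a,q,d)\mapsto ad-q^2$, with a margin of half the negative torus defect.
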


\begin{proof}
Fix $\eps>0$.  For every continuous $\Lambda$-periodic function $G$, write
\[
        \mathbb E_R[G]
        :=
        \frac{\int_{\R^2}G(x)e^{\eps\phi(x)}e^{-|x|^2/(2R^2)}\dd x}
        {\int_{\R^2}e^{\eps\phi(x)}e^{-|x|^2/(2R^2)}\dd x},
\]
\[
        \mathbb E_\infty[G]
        :=
        \frac{\av{Ge^{\eps\phi}}}{\av{e^{\eps\phi}}}.
\]
Lemma \ref{lem:Gaussian-envelope} says exactly that
\begin{equation}\label{eq:ER-to-Einfty}
        \mathbb E_R[G]\longrightarrow\mathbb E_\infty[G]
        \qquad(R\to\infty)
\end{equation}
for every such $G$.

Set
\[
        A(x):=-\eps\nabla^2\phi(x),
        \qquad
        c_R:=R^{-2}.
\]
Then \eqref{eq:HR-eps} and \eqref{eq:gradHR} become
\[
        \Hmat_{F_{R,\eps}}=A+c_RI_2,
        \qquad
        \nabla\Hmat_{F_{R,\eps}}=\nabla A.
\]
Since $A$ and $I_2$ commute, we may expand the relevant polynomial expressions exactly.

First, because $\tr(A)=-\eps\Delta\phi=\eps\phi$, we have
\begin{equation}\label{eq:I-euclidean-expanded}
        \Ical[F_{R,\eps}]
        =
        \mathbb E_R[\tr(A+c_RI_2)]
        =
        \mathbb E_R[\eps\phi]+2c_R.
\end{equation}
On the torus side, \eqref{eq:feps-torus} and \eqref{eq:Hmat-eps} give
\begin{equation}\label{eq:I-torus-ER}
        \Ical[f_\eps]=\mathbb E_\infty[\eps\phi].
\end{equation}

Next,
\[
        \tr\bigl((A+c_RI_2)^2\bigr)
        =
        \tr(A^2)+2c_R\tr(A)+c_R^2\tr(I_2).
\]
Using $\tr(A^2)=\eps^2|\nabla^2\phi|^2$ and $\tr(I_2)=2$, we obtain
\begin{equation}\label{eq:Q-euclidean-expanded}
        \Qcal[F_{R,\eps}]
        =
        \mathbb E_R[\eps^2|\nabla^2\phi|^2]
        +
        2\eps c_R\mathbb E_R[\phi]
        +
        2c_R^2.
\end{equation}
Similarly,
\begin{equation}\label{eq:Q-torus-ER}
        \Qcal[f_\eps]=\mathbb E_\infty[\eps^2|\nabla^2\phi|^2].
\end{equation}

For the third functional, let
\[
        D_0
        :=
        \eps^2|\nabla\nabla^2\phi|^2
        -2\eps^3\tr\bigl((\nabla^2\phi)^3\bigr).
\]
Then $\Dcal[f_\eps]=\mathbb E_\infty[D_0]$.  Also
\[
        2\tr\bigl((A+c_RI_2)^3\bigr)
        =
        2\tr(A^3)
        +6c_R\tr(A^2)
        +6c_R^2\tr(A)
        +2c_R^3\tr(I_2),
\]
so, using $\tr(A^3)=-\eps^3\tr((\nabla^2\phi)^3)$ and $\tr(I_2)=2$,
\begin{equation}\label{eq:D-euclidean-expanded}
        \Dcal[F_{R,\eps}]
        =
        \mathbb E_R[D_0]
        +
        6\eps^2c_R\mathbb E_R[|\nabla^2\phi|^2]
        +
        6\eps c_R^2\mathbb E_R[\phi]
        +
        4c_R^3.
\end{equation}

The functions
\[
        \phi,\qquad
        |\nabla^2\phi|^2,\qquad
        |\nabla\nabla^2\phi|^2,\qquad
        \tr\bigl((\nabla^2\phi)^3\bigr),\qquad
        D_0
\]
are all smooth and $\Lambda$-periodic, hence bounded and continuous.  Therefore \eqref{eq:ER-to-Einfty} applies to each of them, and in particular
\[
        \mathbb E_R[\eps\phi]\to\mathbb E_\infty[\eps\phi]=\Ical[f_\eps],
\]
\[
        \mathbb E_R[\eps^2|\nabla^2\phi|^2]
        \to
        \mathbb E_\infty[\eps^2|\nabla^2\phi|^2]
        =
        \Qcal[f_\eps],
\]
\[
        \mathbb E_R[D_0]\to\mathbb E_\infty[D_0]=\Dcal[f_\eps].
\]
Moreover, boundedness gives
\[
        |\mathbb E_R[\phi]|\le\|\phi\|_{L^\infty},
        \qquad
        \mathbb E_R[|\nabla^2\phi|^2]\le\||\nabla^2\phi|^2\|_{L^\infty},
\]
so the explicit correction terms in \eqref{eq:I-euclidean-expanded}, \eqref{eq:Q-euclidean-expanded}, and \eqref{eq:D-euclidean-expanded} vanish as $R\to\infty$.  Consequently,
\begin{equation}\label{eq:IQD-transfer}
        \Ical[F_{R,\eps}]\to\Ical[f_\eps],\qquad
        \Qcal[F_{R,\eps}]\to\Qcal[f_\eps],\qquad
        \Dcal[F_{R,\eps}]\to\Dcal[f_\eps].
\end{equation}

Now define
\[
        \Delta_{R,\eps}
        :=
        \Ical[F_{R,\eps}]\Dcal[F_{R,\eps}]
        -
        \Qcal[F_{R,\eps}]^2,
\]
\[
        \Delta_\eps
        :=
        \Ical[f_\eps]\Dcal[f_\eps]
        -
        \Qcal[f_\eps]^2.
\]
Since the map $(a,q,d)\mapsto ad-q^2$ is continuous, \eqref{eq:IQD-transfer} implies
\[
        \Delta_{R,\eps}\longrightarrow\Delta_\eps
        \qquad(R\to\infty).
\]
By Proposition \ref{prop:torus}, we have $\Delta_\eps<0$ for all sufficiently small positive $\eps$.  Fix such an $\eps$, and set
\[
        \delta:=-\frac12\Delta_\eps>0.
\]
Then for all sufficiently large $R$,
\[
        |\Delta_{R,\eps}-\Delta_\eps|<\delta,
\]
which yields
\[
        \Delta_{R,\eps}
        <
        \Delta_\eps+\delta
        =
        \frac12\Delta_\eps
        <0.
\]
This is exactly \eqref{eq:Euclidean-negative}.
\end{proof}

We can now combine the torus sign computation with Proposition \ref{prop:euclidean} to complete the proof of the main theorem.

\begin{proof}[Proof of Theorem \ref{thm:main}]
Choose $\eps>0$ small enough so that the torus value in \eqref{eq:torus-product-negative} is negative.  Then choose $R$ large enough in Proposition \ref{prop:euclidean}.  The density $F_{R,\eps}$ is smooth, strictly positive, Gaussian-decaying, and satisfies \eqref{eq:counter-ineq}.

Let $F_{R,\eps,t}=e^{t\Delta/2}F_{R,\eps}$ and define
\[
        \Phi_{R,\eps}(t)
        :=\Ical[F_{R,\eps,t}]\Dcal[F_{R,\eps,t}]
          -\Qcal[F_{R,\eps,t}]^2.
\]
Then \eqref{eq:counter-ineq} says exactly that $\Phi_{R,\eps}(0)<0$.  Since $F_{R,\eps}$ is smooth and Gaussian-decaying, the quantities $\Ical[F_{R,\eps,t}]$, $\Qcal[F_{R,\eps,t}]$, and $\Dcal[F_{R,\eps,t}]$ depend continuously on $t$ near $0$, so $\Phi_{R,\eps}$ is continuous at $0$.  Therefore $\Phi_{R,\eps}(t)<0$ for all sufficiently small positive $t$.  By Lemma \ref{lem:identities},
\[
        \Phi_{R,\eps}(t)=I(t)I''(t)-I'(t)^2.
\]
Hence $(\log I(t))''<0$ for all sufficiently small positive $t$, and the heat flow is not log-convex.
\end{proof}

\section{Numerical Verification in Dimension Two}\label{sec:numerics}

Although the proof of Theorem \ref{thm:main} is analytic, it is useful to record explicit numerical values for the concrete counterexample family
\[
        F_{R,\eps}(x)=Z_{R,\eps}^{-1}\exp\left(\eps\phi(x)-\frac{|x|^2}{2R^2}\right),
        \qquad
        \phi(x)=\cos(k_1\cdot x)+\cos(k_2\cdot x)+\cos(k_3\cdot x).
\]
In Table \ref{tab:d2-numerics} we fix $R=1000$ and evaluate this family for several representative values of $\eps$.

To compute the three functionals, let
\[
        W_{R,\eps}(x)=e^{\eps\phi(x)}e^{-|x|^2/(2R^2)}.
\]
Since the normalization factor $Z_{R,\eps}^{-1}$ cancels in the ratios, we may write
\[
        \Ical[F_{R,\eps}]
        =
        \frac{\int_{\R^2}\tr(\Hmat_{F_{R,\eps}}(x))W_{R,\eps}(x)\dd x}
        {\int_{\R^2}W_{R,\eps}(x)\dd x},
\]
\[
        \Qcal[F_{R,\eps}]
        =
        \frac{\int_{\R^2}\tr(\Hmat_{F_{R,\eps}}(x)^2)W_{R,\eps}(x)\dd x}
        {\int_{\R^2}W_{R,\eps}(x)\dd x},
\]
\[
        \Dcal[F_{R,\eps}]
        =
        \frac{\int_{\R^2}\Bigl(|\nabla\Hmat_{F_{R,\eps}}(x)|^2
        +2\tr(\Hmat_{F_{R,\eps}}(x)^3)\Bigr)W_{R,\eps}(x)\dd x}
        {\int_{\R^2}W_{R,\eps}(x)\dd x}.
\]
All periodic integrands in these weighted averages depend on $x$ only through
\[
        s=k_1\cdot x,
        \qquad
        t=k_2\cdot x.
\]
Accordingly, for any one of the periodic functions being integrated, we write
\[
        G(s,t)=\sum_{m,n\in\mathbb Z}\widehat G(m,n)e^{i(ms+nt)},
\]
where
\[
        \widehat G(m,n)
        =
        \frac{1}{4\pi^2}\int_0^{2\pi}\int_0^{2\pi}
        G(s,t)e^{-i(ms+nt)}\dd s\,\dd t
\]
is the $(m,n)$-th Fourier coefficient.  Since the periodic functions used here are smooth, their Fourier series converges absolutely, so we may integrate term by term.  Writing
\[
        \xi_{m,n}:=mk_1+n k_2,
\]
we have
\[
        e^{i(ms+nt)}=e^{i(mk_1+n k_2)\cdot x}=e^{i\xi_{m,n}\cdot x}.
\]
Hence
\[
        \int_{\R^2}G(x)e^{-|x|^2/(2R^2)}\dd x
        =
        \sum_{m,n\in\mathbb Z}\widehat G(m,n)
        \int_{\R^2}e^{i\xi_{m,n}\cdot x}e^{-|x|^2/(2R^2)}\dd x.
\]
The inner integral is the Fourier transform of the centered Gaussian:
\[
        \int_{\R^2}e^{i\xi\cdot x}e^{-|x|^2/(2R^2)}\dd x
        =
        2\pi R^2 e^{-R^2|\xi|^2/2},
\]
so with $\xi=\xi_{m,n}$ we obtain
\[
        \int_{\R^2}e^{i\xi_{m,n}\cdot x}e^{-|x|^2/(2R^2)}\dd x
        =
        2\pi R^2 e^{-R^2|\xi_{m,n}|^2/2}.
\]
Finally, using
\[
        |k_1|=|k_2|=1,
        \qquad
        k_1\cdot k_2=-\frac12,
\]
we compute
\[
        |\xi_{m,n}|^2
        =
        |mk_1+n k_2|^2
        =
        m^2+n^2+2mn(k_1\cdot k_2)
        =
        m^2+n^2-mn.
\]
Therefore
\begin{equation}\label{eq:numerical-fourier}
        \int_{\R^2}G(x)e^{-|x|^2/(2R^2)}\dd x
        =2\pi R^2\sum_{m,n\in\mathbb Z}\widehat G(m,n)
        \exp\left(-\frac{R^2}{2}(m^2+n^2-mn)\right).
\end{equation}
We computed the relevant Fourier coefficients on a $256\times256$ uniform grid in $(s,t)$.  For $R=1000$, the first nonzero shell in \eqref{eq:numerical-fourier} already carries the suppression factor
\[
        e^{-R^2/2}=e^{-500000},
\]
because the smallest positive value of $m^2+n^2-mn$ is $1$.  Thus, to numerical precision, the Gaussian-weighted integrals are indistinguishable from their zero Fourier modes.  Table \ref{tab:d2-numerics} records the resulting values of $\Ical[F_{1000,\eps}]$, $\Qcal[F_{1000,\eps}]$, $\Dcal[F_{1000,\eps}]$, together with the defect
\[
        \Ical[F_{1000,\eps}]\Dcal[F_{1000,\eps}]-\Qcal[F_{1000,\eps}]^2
\]
and the ratio
\[
        \frac{\Ical[F_{1000,\eps}]\Dcal[F_{1000,\eps}]}{\Qcal[F_{1000,\eps}]^2}.
\]

\begin{table}[H]
\caption{Numerical evaluation of the Euclidean family $F_{R,\eps}$ in dimension two for $R=1000$.  In every row we have $\Ical[F_{R,\eps}]\Dcal[F_{R,\eps}]-\Qcal[F_{R,\eps}]^2<0$ and $\Ical[F_{R,\eps}]\Dcal[F_{R,\eps}]/\Qcal[F_{R,\eps}]^2<1$.}
\label{tab:d2-numerics}
\centering
\small
\begin{tabular}{cccccc}
\toprule
$\eps$ & $\Ical[F_{1000,\eps}]$ & $\Qcal[F_{1000,\eps}]$ & $\Dcal[F_{1000,\eps}]$ & $\Ical\Dcal-\Qcal^2$ & $\Ical\Dcal/\Qcal^2$ \\
\midrule
$0.03$ & $1.37209\times 10^{-3}$ & $1.36059\times 10^{-3}$ & $1.34850\times 10^{-3}$ & $-9.47\times 10^{-10}$ & $0.999488$ \\
$0.04$ & $2.44947\times 10^{-3}$ & $2.42547\times 10^{-3}$ & $2.39930\times 10^{-3}$ & $-5.88\times 10^{-9}$ & $0.999001$ \\
$0.05$ & $3.84443\times 10^{-3}$ & $3.80047\times 10^{-3}$ & $3.75429\times 10^{-3}$ & $-1.05\times 10^{-8}$ & $0.999275$ \\
$0.055$ & $4.66233\times 10^{-3}$ & $4.60516\times 10^{-3}$ & $4.54700\times 10^{-3}$ & $-7.90\times 10^{-9}$ & $0.999627$ \\
\bottomrule
\end{tabular}
\end{table}

Each row of Table \ref{tab:d2-numerics} therefore gives an explicit two-dimensional counterexample: for the displayed value of $\eps$, the density $F_{1000,\eps}$ satisfies
\[
        \Ical[F_{1000,\eps}]\Dcal[F_{1000,\eps}]-\Qcal[F_{1000,\eps}]^2<0,
\]
equivalently
\[
        \frac{\Ical[F_{1000,\eps}]\Dcal[F_{1000,\eps}]}{\Qcal[F_{1000,\eps}]^2}<1.
\]
These values are consistent with the perturbative calculation in Proposition \ref{prop:torus} and with the transfer argument of Proposition \ref{prop:euclidean}.  The ratios are close to $1$ because the Gaussian envelope contributes the positive shift $R^{-2}I_2$ to the logarithmic Hessian, but the sign remains strictly negative for all parameter choices listed in the table.

\section{Higher-dimensional Counterexamples}\label{sec:high-d}

There are two natural ways to extend the two-dimensional counterexample to higher dimensions.  The
first is the soft tensorization argument obtained by adjoining broad Gaussian factors.  The second is a
frequency-lattice generalization of the hexagonal zero-mode mechanism on a $d$-dimensional torus.  We record both mechanisms in this
section.  The tensorization argument proves the qualitative higher-dimensional failure.  The direct
construction shows that the same closed-frequency mechanism exists intrinsically in every fixed
dimension.

\subsection{Tensorization of the two-dimensional example}

We first justify Corollary~\ref{cor:high-d} by adjoining a broad Gaussian block in the remaining coordinates.  In the
large-variance limit, this extra factor becomes asymptotically invisible in the quotient
$\mathcal I\mathcal D/\mathcal Q^2$.

\begin{proof}[Proof of Corollary~\ref{cor:high-d}]
Let $F$ be the two-dimensional counterexample from Theorem~\ref{thm:main}.  For $d>2$, let $G_\sigma$ be the
centered Gaussian density on $\mathbb R^{d-2}$ with covariance $\sigma^2 I_{d-2}$, and set
\[
  H_\sigma(x,y)=F(x)G_\sigma(y),
  \qquad x\in\mathbb R^2,\quad y\in\mathbb R^{d-2}.
\]
For product densities, the logarithmic Hessian matrix is block diagonal.  Therefore the three
functionals add:
\[
  \mathcal I[H_\sigma]=\mathcal I[F]+\mathcal I[G_\sigma],\qquad
  \mathcal Q[H_\sigma]=\mathcal Q[F]+\mathcal Q[G_\sigma],\qquad
  \mathcal D[H_\sigma]=\mathcal D[F]+\mathcal D[G_\sigma].
\]
For the Gaussian block,
\[
  \mathcal I[G_\sigma]=(d-2)\sigma^{-2},\qquad
  \mathcal Q[G_\sigma]=(d-2)\sigma^{-4},\qquad
  \mathcal D[G_\sigma]=2(d-2)\sigma^{-6}.
\]
As $\sigma\to\infty$,
\[
  \frac{\mathcal I[H_\sigma]\mathcal D[H_\sigma]}{\mathcal Q[H_\sigma]^2}
  \longrightarrow
  \frac{\mathcal I[F]\mathcal D[F]}{\mathcal Q[F]^2}<1 .
\]
Taking $\sigma$ sufficiently large gives
\[
  \mathcal I[H_\sigma]\mathcal D[H_\sigma]-\mathcal Q[H_\sigma]^2<0.
\]
Define $H_{\sigma,t}=e^{t\Delta/2}H_\sigma$.  As in the proof of Theorem~\ref{thm:main}, the defect
\[
  \Phi_{H_\sigma}(t)
  :=\mathcal I[H_{\sigma,t}]\mathcal D[H_{\sigma,t}]
    -\mathcal Q[H_{\sigma,t}]^2
\]
is continuous at $t=0$, with $\Phi_{H_\sigma}(0)<0$.  Therefore $\Phi_{H_\sigma}(t)<0$ for all
sufficiently small positive $t$, and Lemma~\ref{lem:identities} shows that the corresponding heat flow violates
\eqref{eq:log-convexity}.
\end{proof}

The tensorization argument is enough to prove failure in every dimension $d\ge2$.  However, it hides
the genuinely $d$-dimensional frequency geometry.  We now record a direct construction in
$\mathbb R^d$.  It is equivalent to the usual regular-simplex model, but it avoids passing through an
ambient $(d+1)$-dimensional hyperplane.

\subsection{A higher-dimensional generalization of the hexagonal frequency construction}
\label{subsec:direct-lattice-family}

This second construction keeps the closed-frequency mechanism visible in dimension $d$.  The point of the
construction is that the torus is not chosen first.  We first choose a frequency lattice
$K_d\subset\mathbb R^d$, then take its dual period lattice and work on the corresponding flat torus.

\begin{definition}[Direct frequency lattice]
\label{def:direct-frequency-lattice}
Fix $d\ge2$.  Let $b_1,\ldots,b_d\in\mathbb R^d$ be vectors satisfying
\begin{equation}\label{eq:direct-b-gram}
  b_i\cdot b_j=
  \begin{cases}
    1, & i=j,\\[0.2em]
    -\dfrac12, & |i-j|=1,\\[0.2em]
    0, & |i-j|\ge2.
  \end{cases}
\end{equation}
Define the frequency lattice
\begin{equation}\label{eq:direct-Kd}
  K_d:=\mathbb Z b_1\oplus\cdots\oplus\mathbb Z b_d\subset\mathbb R^d,
\end{equation}
and the dual period lattice
\begin{equation}\label{eq:direct-Lambdad}
  \Lambda_d:=K_d^\vee
  :=\{\lambda\in\mathbb R^d:\xi\cdot\lambda\in2\pi\mathbb Z
        \text{ for all }\xi\in K_d\}.
\end{equation}
We write
\begin{equation}\label{eq:direct-torus}
  \mathbb T_d:=\mathbb R^d/\Lambda_d
\end{equation}
for the associated flat torus, and $\langle\cdot\rangle_d$ for normalized Haar average on
$\mathbb T_d$.
\end{definition}

The definition is intentionally short: it only defines the lattice and the torus.  We next introduce
the active frequencies.  For
\[
  1\le i<j\le d+1,
\]
set
\begin{equation}\label{eq:direct-xiij}
  \xi_{ij}:=\sum_{r=i}^{j-1}b_r.
\end{equation}
The sum in \eqref{eq:direct-xiij} is over the indices $r=i,i+1,\ldots,j-1$.  In particular,
\begin{equation}\label{eq:direct-adjacent-xi}
  \xi_{i,i+1}=b_i,
  \qquad
  \xi_{12}=b_1.
\end{equation}
Thus the vectors $b_1,\ldots,b_d$ are the basis vectors of the frequency lattice, whereas the active
frequencies are all interval sums $\xi_{ij}$.  Their number is
\begin{equation}\label{eq:direct-Nd}
  N_d:=\#\{(i,j):1\le i<j\le d+1\}=\binom{d+1}{2}.
\end{equation}
We shall also use
\begin{equation}\label{eq:direct-Md}
  M_d:=\#\{(i,j,k):1\le i<j<k\le d+1\}=\binom{d+1}{3}.
\end{equation}
Here $M_d$ counts the closed frequency triangles below.

\paragraph{Existence of the Gram model.}
The Gram matrix \eqref{eq:direct-b-gram} is positive definite, and it may be realized explicitly.  If
$e_1,\ldots,e_d$ is the standard orthonormal basis of $\mathbb R^d$, take
\begin{equation}\label{eq:direct-explicit-b}
  b_1=e_1,
  \qquad
  b_i=-\sqrt{\frac{i-1}{2i}}\,e_{i-1}
      +\sqrt{\frac{i+1}{2i}}\,e_i,
  \qquad 2\le i\le d.
\end{equation}
Indeed, $|b_i|^2=(i-1)/(2i)+(i+1)/(2i)=1$.  Moreover,
\[
  b_{i-1}\cdot b_i
  =\sqrt{\frac{i}{2(i-1)}}\left(-\sqrt{\frac{i-1}{2i}}\right)
  =-\frac12,
\]
and $b_i\cdot b_j=0$ when $|i-j|\ge2$.  Hence \eqref{eq:direct-b-gram} holds.

\begin{lemma}[Elementary geometry of the active frequencies]\label{lem:active-frequency-geometry}
The vectors $\xi_{ij}$ have the following properties.
\begin{enumerate}
\item Each active frequency has unit length:
\begin{equation}\label{eq:direct-xi-unit}
  |\xi_{ij}|=1.
\end{equation}
\item For every $1\le i<j<k\le d+1$,
\begin{equation}\label{eq:direct-triangle-relation}
  \xi_{ij}+\xi_{jk}-\xi_{ik}=0.
\end{equation}
\item Every nonzero vector in $K_d$ has length at least one.  More precisely, if
$\xi=\sum_{r=1}^d m_rb_r$ with $m_r\in\mathbb Z$, then
\begin{equation}\label{eq:frequency-gap-direct}
  |\xi|^2
  =\sum_{r=1}^d m_r^2-\sum_{r=1}^{d-1}m_rm_{r+1}
  =\frac12\sum_{r=0}^{d}(m_{r+1}-m_r)^2,
\end{equation}
where $m_0=m_{d+1}=0$.  Consequently $|\xi|^2\ge1$ for every $0\ne\xi\in K_d$, and equality is attained
precisely by the shortest vectors $\pm\xi_{ij}$.
\end{enumerate}
\end{lemma}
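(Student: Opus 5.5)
The whole lemma will follow from the Gram matrix \eqref{eq:direct-b-gram}: a tridiagonal form with $1$ on the diagonal and $-\frac12$ on the off-diagonals. I will prove item 3 first, since item 1 is a special case of it, and item 2 is purely combinatorial.

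\emph{Item 3.} For $\xi=\sum_r m_rb_r$, expanding $|\xi|^2=\sum_{r,s}m_rm_s\,b_r\cdot b_s$ and using the Gram relations gives
\[
|\xi|^2=\sum_{r=1}^d m_r^2-\sum_{r=1}^{d-1}m_rm_{r+1}.
\]
The factor $2\cdot(-\frac12)$ on the adjacent cross terms is what produces the coefficient $-1$.

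To get the second form, set $m_0=m_{d+1}=0$ and expand $\frac12\sum_{r=0}^d(m_{r+1}-m_r)^2$. Each $m_r$ with $1\le r\le d$ occurs squared in exactly two consecutive differences, contributing $\frac12\cdot2\,m_r^2$. The cross terms are $-m_rm_{r+1}$ for $0\le r\le d$, and the boundary terms vanish.

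For the lower bound, take $\xi\ne0$, so the integer sequence $(m_0,\dots,m_{d+1})$ is not identically zero. It starts and ends at $0$, so it must leave $0$ at least once and return at least once. Hence at least two of the integer differences $m_{r+1}-m_r$ are nonzero, each contributing at least $1$, which gives $|\xi|^2\ge1$.

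Equality holds exactly when precisely two differences are nonzero and both equal $\pm1$. Then the sequence jumps once by $\pm1$ at some index and jumps back once, so $m_r=\pm1$ on an interval $i\le r\le j-1$ and $m_r=0$ elsewhere, i.e.\ $\xi=\pm\xi_{ij}$. This characterization of the equality cases is the only step that needs any care. The argument is just that a zero-boundary integer path with total squared-increment $2$ must be a single $\pm1$ plateau.

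\emph{Item 1.} This follows immediately, since $\xi_{ij}$ has coefficient sequence equal to the indicator of $\{i,\dots,j-1\}$. That sequence has exactly two unit jumps, so $|\xi_{ij}|^2=\frac12\cdot2=1$.

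\emph{Item 2.} This is the telescoping identity for interval sums: $\sum_{r=i}^{j-1}b_r+\sum_{r=j}^{k-1}b_r=\sum_{r=i}^{k-1}b_r$, so $\xi_{ij}+\xi_{jk}-\xi_{ik}=0$.
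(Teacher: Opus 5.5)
Your proof is correct and follows essentially the same route as the paper: the Gram-matrix expansion, the jump identity with $m_0=m_{d+1}=0$, the at-least-two-nonzero-jumps bound with the single $\pm1$ plateau as the equality case, and telescoping for item 2. The only cosmetic difference is that you read off $|\xi_{ij}|=1$ from the jump formula (two unit jumps), whereas the paper proves item 1 first by directly counting the $j-i$ diagonal terms and the $j-i-1$ adjacent pairs.
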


\begin{proof}
For \eqref{eq:direct-xi-unit}, fix $1\le i<j\le d+1$ and put $m=j-i$.  Then
\[
  \xi_{ij}=b_i+b_{i+1}+\cdots+b_{j-1}
\]
contains $m$ consecutive vectors.  Therefore
\begin{align*}
  |\xi_{ij}|^2
  &=\sum_{r=i}^{j-1}|b_r|^2
    +2\sum_{i\le r<s\le j-1}b_r\cdot b_s.
\end{align*}
The first sum is $m$.  In the second sum, the only nonzero terms are neighboring pairs
$b_r\cdot b_{r+1}=-1/2$, and there are $m-1$ such pairs.  Hence
\[
  |\xi_{ij}|^2=m+2(m-1)\left(-\frac12\right)=1.
\]
This proves \eqref{eq:direct-xi-unit}.

For $i<j<k$, the definition gives
\[
  \xi_{ij}=\sum_{r=i}^{j-1}b_r,
  \qquad
  \xi_{jk}=\sum_{r=j}^{k-1}b_r,
  \qquad
  \xi_{ik}=\sum_{r=i}^{k-1}b_r.
\]
Therefore $\xi_{ij}+\xi_{jk}=\xi_{ik}$, proving \eqref{eq:direct-triangle-relation}.

Finally, for $\xi=\sum_{r=1}^d m_rb_r$, the first equality in
\eqref{eq:frequency-gap-direct} follows by expanding with the Gram matrix \eqref{eq:direct-b-gram}.
For the second equality, set $m_0=m_{d+1}=0$ and expand
\[
  \sum_{r=0}^{d}(m_{r+1}-m_r)^2
  =2\sum_{r=1}^d m_r^2-2\sum_{r=1}^{d-1}m_rm_{r+1}.
\]
Since the $m_r$ are integers, a nonzero sequence
$0=m_0,m_1,\ldots,m_d,m_{d+1}=0$ must have at least two nonzero jumps.  Thus
$\sum_{r=0}^d(m_{r+1}-m_r)^2\ge2$, and so $|\xi|^2\ge1$.  Equality holds exactly when there are two
jumps of size $\pm1$ and all other jumps are zero.  Equivalently, $(m_r)$ is equal to $1$ or $-1$ on
one nonempty interval and zero outside.  This gives precisely $\xi=\pm\sum_{r=i}^{j-1}b_r=\pm\xi_{ij}$.
\end{proof}

Now define the periodic potential
\begin{equation}\label{eq:direct-psid}
  \psi_d(x):=\sum_{1\le i<j\le d+1}\cos(\xi_{ij}\cdot x),
  \qquad x\in\mathbb T_d.
\end{equation}
Let
\begin{equation}\label{eq:direct-Rplus}
  \mathcal R_d^+:=\{\xi_{ij}:1\le i<j\le d+1\}.
\end{equation}
For $\alpha\in\mathcal R_d^+$, write
\begin{equation}\label{eq:direct-cs-alpha}
  c_\alpha(x):=\cos(\alpha\cdot x),
  \qquad
  s_\alpha(x):=\sin(\alpha\cdot x).
\end{equation}
Thus
\[
  \psi_d=\sum_{\alpha\in\mathcal R_d^+}c_\alpha.
\]
Since $|\alpha|=1$ for every active frequency,
\begin{equation}\label{eq:direct-minus-laplacian-psi}
  -\Delta\psi_d=\psi_d.
\end{equation}
For $\varepsilon$ small, set
\begin{equation}\label{eq:direct-fdepsilon}
  f_{d,\varepsilon}=Z_{d,\varepsilon}^{-1}e^{\varepsilon\psi_d},
  \qquad
  Z_{d,\varepsilon}:=\langle e^{\varepsilon\psi_d}\rangle_d.
\end{equation}
Then
\begin{equation}\label{eq:direct-H-fdepsilon}
  \mathsf H_{f_{d,\varepsilon}}
  =-\nabla^2\log f_{d,\varepsilon}
  =-\varepsilon\nabla^2\psi_d.
\end{equation}

The following proposition is the detailed expansion underlying this higher-dimensional hexagonal frequency construction.

\begin{proposition}[Higher-dimensional hexagonal frequency-lattice expansion]\label{prop:direct-lattice-expansion}
On the torus $\mathbb T_d$, one has
\begin{align}
  \mathcal I[f_{d,\varepsilon}]
  &=\frac{N_d}{2}\varepsilon^2+\frac{3M_d}{4}\varepsilon^3+O_d(\varepsilon^4),
  \label{eq:direct-I-exp}\\
  \mathcal Q[f_{d,\varepsilon}]
  &=\frac{N_d}{2}\varepsilon^2+\frac{3M_d}{8}\varepsilon^3+O_d(\varepsilon^4),
  \label{eq:direct-Q-exp}\\
  \mathcal D[f_{d,\varepsilon}]
  &=\frac{N_d}{2}\varepsilon^2-\frac{3M_d}{16}\varepsilon^3+O_d(\varepsilon^4).
  \label{eq:direct-D-exp}
\end{align}
Consequently,
\begin{equation}\label{eq:direct-defect-expansion}
  \mathcal I[f_{d,\varepsilon}]\mathcal D[f_{d,\varepsilon}]
  -\mathcal Q[f_{d,\varepsilon}]^2
  =-\frac{3N_dM_d}{32}\varepsilon^5+O_d(\varepsilon^6),
\end{equation}
and
\begin{equation}\label{eq:direct-ratio-expansion}
  \frac{\mathcal I[f_{d,\varepsilon}]\mathcal D[f_{d,\varepsilon}]}
       {\mathcal Q[f_{d,\varepsilon}]^2}
  =1-\frac{d-1}{8}\varepsilon+O_d(\varepsilon^2).
\end{equation}
\end{proposition}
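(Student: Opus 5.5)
The plan is to repeat the proof of Proposition~\ref{prop:torus} almost word for word, with Lemma~\ref{lem:averages} replaced by the analogous list of averages for $\psi_d$ on $\mathbb T_d$. The inputs are already in place. Since $\tr(\Hmat)=-\eps\Delta\psi_d=\eps\psi_d$ by \eqref{eq:direct-minus-laplacian-psi}, the three functionals expand as
\[
\Ical=\eps^2\av{\psi_d^2}_d+\tfrac{\eps^3}{2}\av{\psi_d^3}_d+O(\eps^4),
\]
\[
\Qcal=\eps^2\av{|\nabla^2\psi_d|^2}_d+\eps^3\av{\psi_d|\nabla^2\psi_d|^2}_d+O(\eps^4),
\]
\[
\Dcal=\eps^2\av{|\nabla\nabla^2\psi_d|^2}_d+\eps^3\bigl(\av{\psi_d|\nabla\nabla^2\psi_d|^2}_d-2\av{\tr((\nabla^2\psi_d)^3)}_d\bigr)+O(\eps^4).
\]
Here we use $Z_{d,\eps}=1+O(\eps^2)$ and $\av{\psi_d}_d=0$.

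The key orthogonality tool comes from Lemma~\ref{lem:active-frequency-geometry}. Since $\Lambda_d=K_d^\vee$, we have $\av{e^{i\xi\cdot x}}_d=\mathbf 1_{\xi=0}$ for $\xi\in K_d$.

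\textbf{Quadratic terms.} For distinct $\alpha,\beta\in\mathcal R_d^+$, the vectors $\alpha\pm\beta$ are nonzero, because distinct interval sums differ as vectors. So every cross term averages to zero and each diagonal term gives $\tfrac12$. All three quadratic coefficients therefore equal $N_d/2$, because $|\alpha|^2=|\alpha|^4=|\alpha|^6=1$.

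\textbf{Cubic terms.} A product $c_\alpha c_\beta c_\gamma$, or $c_\gamma s_\alpha s_\beta$, has nonzero mean only if $\pm\alpha\pm\beta\pm\gamma=0$ for some choice of signs. The main combinatorial step is to classify these zero-sum triples among the positive roots $\xi_{ij}$. I would argue as follows. Repeated indices give $\pm\alpha$, $\pm 3\alpha$ or $2\alpha\pm\beta$, which are nonzero, since $2\alpha=\pm\beta$ is impossible for unit vectors. For three distinct roots, write each in the $b$-coordinates as the indicator of an interval and apply \eqref{eq:frequency-gap-direct}, i.e.\ a signed sum of three interval indicators must vanish. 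This forces one interval to be the concatenation of the other two, which is exactly the triangle relation \eqref{eq:direct-triangle-relation}. Hence the contributing unordered triples are precisely $\{\xi_{ij},\xi_{jk},\xi_{ik}\}$ with $i<j<k$, and there are $M_d$ of them.

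Each such triangle has pairwise inner products $\xi_{ij}\cdot\xi_{jk}=-\tfrac12$ and $\xi_{ij}\cdot\xi_{ik}=\xi_{jk}\cdot\xi_{ik}=\tfrac12$. These follow from unit length and $\xi_{ik}=\xi_{ij}+\xi_{jk}$. They match the hexagonal triad $(k_1,k_2,k_3)$ after flipping $\xi_{ik}\mapsto-\xi_{ik}$. Cosine is even, and the $\sin\cdot\sin$ products pick up compensating signs, as do the odd tensor powers $\alpha^{\otimes3}$ and the traces $\tr(M_iM_jM_\ell)$. So I expect each triangle to contribute exactly $\tfrac13$ of the hexagonal cubic averages in Lemma~\ref{lem:averages}. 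Indeed, the hexagonal $\phi$ is the $d=2$ case, where $M_2=1$ and $N_2=3$. This gives the following values:
\begin{itemize}
\item $\av{\psi_d^3}_d=\tfrac32M_d$,
\item $\av{\psi_d|\nabla^2\psi_d|^2}_d=\tfrac38M_d$,
\item $\av{\psi_d|\nabla\nabla^2\psi_d|^2}_d=\tfrac3{16}M_d$,
\item $\av{\tr((\nabla^2\psi_d)^3)}_d=\tfrac3{16}M_d$.
\end{itemize}
For the first, one can alternatively note that the hexagonal value $6\cdot\tfrac14$ per triangle carries over directly. Then $\av{\psi_d|\nabla\psi_d|^2}_d$ is not even needed, because $\Ical$ uses only $\av{\psi_d^3}_d$. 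These values yield \eqref{eq:direct-I-exp}--\eqref{eq:direct-D-exp}.

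\textbf{Main obstacle.} The step I expect to be hardest is the careful sign bookkeeping in the per-triangle reduction, namely checking the sign of each term under $\xi_{ik}\mapsto-\xi_{ik}$ rather than just asserting the invariance.

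\textbf{Conclusion.} With $a_2=q_2=d_2=N_d/2$, $a_3=\tfrac34M_d$, $q_3=\tfrac38M_d$ and $d_3=-\tfrac3{16}M_d$, the $\eps^5$ coefficient of $\Ical\Dcal-\Qcal^2$ is
\[
\tfrac{N_d}{2}\bigl(-\tfrac3{16}+\tfrac34-\tfrac34\bigr)M_d=-\tfrac{3N_dM_d}{32}.
\]
Dividing by $\Qcal^2=\tfrac{N_d^2}{4}\eps^4(1+O(\eps))$ gives the ratio $1-\tfrac{3M_d}{8N_d}\eps+O(\eps^2)$. Since $M_d/N_d=(d-1)/3$, this equals $1-\tfrac{d-1}{8}\eps+O(\eps^2)$, which is \eqref{eq:direct-ratio-expansion}.
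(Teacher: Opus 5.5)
Your proposal is correct and follows essentially the paper's proof: the same $\eps$-expansion, Fourier selection on $\mathbb T_d$, classification of the contributing triples as the $M_d$ adjacent-interval triangles, and final algebra. The one variation is that you import the per-triangle constants from Lemma~\ref{lem:averages}, whereas the paper recomputes them from $p\cdot q=-\tfrac12$ and $p\cdot r=q\cdot r=\tfrac12$. That import needs no sign bookkeeping: since $\cos(r\cdot x)=\cos(-r\cdot x)$, the sum $c_p+c_q+c_r$ is literally $\phi$ composed with a partial isometry $\R^d\to\R^2$ sending $p,q,-r$ to $k_1,k_2,k_3$, and all the contractions involved are invariant under it.

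One slip to fix: since $M_2=1$, each triangle contributes exactly the hexagonal cubic averages, not $\tfrac13$ of them. The factor $\tfrac13$ is the per-mode share of the quadratic averages, because $N_2=3$. The values you then list, such as $\av{\psi_d^3}_d=\tfrac32M_d$, are the correct ones and agree with the paper.
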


\begin{proof}
We give the details because the constants in the expansion come from two separate mechanisms: the
zero-mode selection on the torus and the metric contractions of the frequency vectors.

\paragraph{Step 1: Fourier selection and cubic zero modes.}
By the definition of $\Lambda_d$, the functions $e^{i\xi\cdot x}$, $\xi\in K_d$, are precisely the
Fourier characters on $\mathbb T_d$.  Hence
\begin{equation}\label{eq:direct-fourier-selection}
  \langle e^{i\xi\cdot x}\rangle_d=
  \begin{cases}
    1,&\xi=0,\\
    0,&\xi\ne0.
  \end{cases}
\end{equation}
For $\alpha,\beta,\gamma\in\mathcal R_d^+$,
\[
  c_\alpha c_\beta c_\gamma
  =\frac18\sum_{\sigma_1,\sigma_2,\sigma_3\in\{\pm1\}}
  e^{i(\sigma_1\alpha+\sigma_2\beta+\sigma_3\gamma)\cdot x}.
\]
Therefore a cubic average can be nonzero only if a signed sum of the three active frequencies is zero.
For the interval frequencies \eqref{eq:direct-xiij}, such a relation is necessarily one of the closed
triangles
\begin{equation}\label{eq:direct-cubic-triangle}
  \xi_{ij}+\xi_{jk}-\xi_{ik}=0,
  \qquad 1\le i<j<k\le d+1.
\end{equation}
Indeed, in the $b$-basis the vector $\xi_{ij}$ is the indicator of the interval $[i,j-1]$.  A relation
of three signed interval indicators can be rearranged as
\[
  \mathbf 1_A+\mathbf 1_B=\mathbf 1_C.
\]
The two intervals $A,B$ must be disjoint, and their union must again be an interval; hence they are
adjacent intervals.  This is exactly \eqref{eq:direct-cubic-triangle}.

\paragraph{Step 2: the basic averages of $\psi_d$.}
For every active frequency,
\[
  \langle c_\alpha\rangle_d=0,
  \qquad
  \langle c_\alpha^2\rangle_d=\frac12,
\]
and for distinct active frequencies $\alpha\ne\beta$,
\[
  \langle c_\alpha c_\beta\rangle_d=0.
\]
Therefore
\begin{equation}\label{eq:direct-psi-second}
  \langle\psi_d\rangle_d=0,
  \qquad
  \langle\psi_d^2\rangle_d=\frac{N_d}{2}.
\end{equation}
Next fix one closed triangle
\[
  p=\xi_{ij},\qquad q=\xi_{jk},\qquad r=\xi_{ik}=p+q.
\]
If $A=p\cdot x$ and $B=q\cdot x$, then $r\cdot x=A+B$.  Hence
\begin{align*}
  \cos A\cos B\cos(A+B)
  &=\frac12\cos^2(A+B)+\frac12\cos(A-B)\cos(A+B)\\
  &=\frac14\{1+\cos(2A)+\cos(2B)+\cos(2A+2B)\}.
\end{align*}
Averaging kills the last three terms, so
\begin{equation}\label{eq:direct-triangle-average}
  \langle c_pc_qc_r\rangle_d=\frac14.
\end{equation}
In $\psi_d^3$, each unordered triangle gives six ordered products.  Since there are $M_d$ closed
triangles,
\begin{equation}\label{eq:direct-psi-third}
  \langle\psi_d^3\rangle_d
  =M_d\cdot6\cdot\frac14
  =\frac{3M_d}{2}.
\end{equation}

\paragraph{Step 3: Hessian identities and contractions.}
For $\alpha\in\mathcal R_d^+$,
\[
  \nabla^2\cos(\alpha\cdot x)=-c_\alpha\,\alpha\otimes\alpha.
\]
It is convenient to set
\begin{equation}\label{eq:direct-K-definition}
  \mathsf K_d:=-\nabla^2\psi_d
  =\sum_{\alpha\in\mathcal R_d^+}c_\alpha\,\alpha\otimes\alpha.
\end{equation}
Then
\begin{equation}\label{eq:direct-grad-K}
  \nabla\mathsf K_d
  =-\sum_{\alpha\in\mathcal R_d^+}s_\alpha\,\alpha^{\otimes3}.
\end{equation}
We shall use
\begin{equation}\label{eq:direct-rank-one-contractions}
  (\alpha\otimes\alpha):(\beta\otimes\beta)=(\alpha\cdot\beta)^2,
\end{equation}
\begin{equation}\label{eq:direct-third-contraction}
  \alpha^{\otimes3}:\beta^{\otimes3}=(\alpha\cdot\beta)^3,
\end{equation}
and
\begin{equation}\label{eq:direct-trace-contraction}
  \operatorname{tr}\{(\alpha\otimes\alpha)(\beta\otimes\beta)(\gamma\otimes\gamma)\}
  = (\alpha\cdot\beta)(\beta\cdot\gamma)(\gamma\cdot\alpha).
\end{equation}
For a closed triangle $p+q-r=0$ with $|p|=|q|=|r|=1$, we have $r=p+q$, hence
\begin{equation}\label{eq:direct-triangle-inner-products}
  p\cdot q=-\frac12,
  \qquad
  p\cdot r=q\cdot r=\frac12.
\end{equation}

\paragraph{Step 4: expansion of $\mathcal I$.}
Since $-\Delta\psi_d=\psi_d$ and $\mathsf H_{f_{d,\varepsilon}}=\varepsilon\mathsf K_d$, we have
\[
  \operatorname{tr}(\mathsf H_{f_{d,\varepsilon}})=\varepsilon\psi_d.
\]
Thus
\[
  \mathcal I[f_{d,\varepsilon}]
  =\varepsilon\frac{\langle\psi_d e^{\varepsilon\psi_d}\rangle_d}
                       {\langle e^{\varepsilon\psi_d}\rangle_d}.
\]
Using
\[
  e^{\varepsilon\psi_d}=1+\varepsilon\psi_d+\frac{\varepsilon^2}{2}\psi_d^2+O_d(\varepsilon^3),
\]
we get
\begin{align*}
  \langle\psi_d e^{\varepsilon\psi_d}\rangle_d
  &=\langle\psi_d\rangle_d
    +\varepsilon\langle\psi_d^2\rangle_d
    +\frac{\varepsilon^2}{2}\langle\psi_d^3\rangle_d
    +O_d(\varepsilon^3)\\
  &=\frac{N_d}{2}\varepsilon+\frac{3M_d}{4}\varepsilon^2+O_d(\varepsilon^3).
\end{align*}
Moreover,
\[
  \langle e^{\varepsilon\psi_d}\rangle_d=1+O_d(\varepsilon^2),
\]
so the denominator does not affect the coefficients through order $\varepsilon^3$.  This proves
\eqref{eq:direct-I-exp}.

\paragraph{Step 5: expansion of $\mathcal Q$.}
Since
\[
  \operatorname{tr}(\mathsf H_{f_{d,\varepsilon}}^2)=\varepsilon^2|\mathsf K_d|^2,
\]
we have
\begin{align*}
  \mathcal Q[f_{d,\varepsilon}]
  =\varepsilon^2\frac{\langle |\mathsf K_d|^2 e^{\varepsilon\psi_d}\rangle_d}
                         {\langle e^{\varepsilon\psi_d}\rangle_d}
  =\varepsilon^2\langle |\mathsf K_d|^2\rangle_d
   +\varepsilon^3\langle\psi_d|\mathsf K_d|^2\rangle_d
   +O_d(\varepsilon^4).
\end{align*}
From \eqref{eq:direct-K-definition} and \eqref{eq:direct-rank-one-contractions},
\[
  |\mathsf K_d|^2
  =\sum_{\alpha,\beta\in\mathcal R_d^+}
    c_\alpha c_\beta(\alpha\cdot\beta)^2.
\]
Only the diagonal terms survive in the average, so
\begin{equation}\label{eq:direct-K-square-average}
  \langle |\mathsf K_d|^2\rangle_d
  =\sum_{\alpha\in\mathcal R_d^+}\frac12|\alpha|^4
  =\frac{N_d}{2}.
\end{equation}
For the cubic term, expand
\[
  \psi_d|\mathsf K_d|^2
  =\sum_{\gamma,\alpha,\beta}
    c_\gamma c_\alpha c_\beta(\alpha\cdot\beta)^2.
\]
Only closed triangles contribute.  For one triangle $\{p,q,r\}$, there are six ordered triples.  In
each of them, the trigonometric average is $1/4$, and the squared inner product is $(1/2)^2=1/4$.
Therefore each triangle contributes
\[
  6\cdot\frac14\cdot\frac14=\frac38.
\]
Thus
\begin{equation}\label{eq:direct-psi-K-square-average}
  \langle\psi_d|\mathsf K_d|^2\rangle_d=\frac{3M_d}{8}.
\end{equation}
Combining \eqref{eq:direct-K-square-average} and \eqref{eq:direct-psi-K-square-average} proves
\eqref{eq:direct-Q-exp}.

\paragraph{Step 6: expansion of $\mathcal D$.}
Using $\mathsf H_{f_{d,\varepsilon}}=\varepsilon\mathsf K_d$,
\begin{align*}
  \mathcal D[f_{d,\varepsilon}]
  &=\frac{\langle (\varepsilon^2|\nabla\mathsf K_d|^2
       +2\varepsilon^3\operatorname{tr}(\mathsf K_d^3))e^{\varepsilon\psi_d}\rangle_d}
       {\langle e^{\varepsilon\psi_d}\rangle_d}\\
  &=\varepsilon^2\langle |\nabla\mathsf K_d|^2\rangle_d
    +\varepsilon^3\left(
      \langle\psi_d|\nabla\mathsf K_d|^2\rangle_d
      +2\langle\operatorname{tr}(\mathsf K_d^3)\rangle_d
      \right)
    +O_d(\varepsilon^4).
\end{align*}
From \eqref{eq:direct-grad-K} and \eqref{eq:direct-third-contraction},
\[
  |\nabla\mathsf K_d|^2
  =\sum_{\alpha,\beta}s_\alpha s_\beta(\alpha\cdot\beta)^3.
\]
Only the diagonal terms survive in the average, and hence
\begin{equation}\label{eq:direct-grad-K-average}
  \langle |\nabla\mathsf K_d|^2\rangle_d
  =\sum_{\alpha}\frac12|\alpha|^6=\frac{N_d}{2}.
\end{equation}
We now compute the cubic part.  Fix a closed triangle $p+q-r=0$ and set
$A=p\cdot x$, $B=q\cdot x$, so $r\cdot x=A+B$.  The three elementary averages are
\begin{equation}\label{eq:direct-sine-cosine-averages}
  \langle c_rs_ps_q\rangle_d=-\frac14,
  \qquad
  \langle c_qs_ps_r\rangle_d=\frac14,
  \qquad
  \langle c_ps_qs_r\rangle_d=\frac14.
\end{equation}
For instance,
\[
  \cos(A+B)\sin A\sin B
  =\sin A\cos A\sin B\cos B-\sin^2 A\sin^2 B,
\]
whose average is $-1/4$.  The other two identities are similar.  Combining
\eqref{eq:direct-sine-cosine-averages} with \eqref{eq:direct-triangle-inner-products}, the contribution
of one triangle to $\langle\psi_d|\nabla\mathsf K_d|^2\rangle_d$ is
\[
  2\left(-\frac14\right)\left(-\frac18\right)
  +2\left(\frac14\right)\left(\frac18\right)
  +2\left(\frac14\right)\left(\frac18\right)
  =\frac3{16}.
\]
Therefore
\begin{equation}\label{eq:direct-psi-grad-K-average}
  \langle\psi_d|\nabla\mathsf K_d|^2\rangle_d=\frac{3M_d}{16}.
\end{equation}
Next, by \eqref{eq:direct-trace-contraction},
\[
  \operatorname{tr}(\mathsf K_d^3)
  =\sum_{\alpha,\beta,\gamma}
  c_\alpha c_\beta c_\gamma
  (\alpha\cdot\beta)(\beta\cdot\gamma)(\gamma\cdot\alpha).
\]
For one closed triangle, each of the six ordered triples has trigonometric average $1/4$ and inner
product factor
\[
  (p\cdot q)(q\cdot r)(r\cdot p)
  =\left(-\frac12\right)\left(\frac12\right)\left(\frac12\right)
  =-\frac18.
\]
Thus each triangle contributes
\[
  6\cdot\frac14\left(-\frac18\right)=-\frac3{16},
\]
and hence
\begin{equation}\label{eq:direct-tr-K-cube-average}
  \langle\operatorname{tr}(\mathsf K_d^3)\rangle_d=-\frac{3M_d}{16}.
\end{equation}
Using \eqref{eq:direct-grad-K-average}, \eqref{eq:direct-psi-grad-K-average}, and
\eqref{eq:direct-tr-K-cube-average}, the cubic coefficient in $\mathcal D$ is
\[
  \frac{3M_d}{16}+2\left(-\frac{3M_d}{16}\right)=-\frac{3M_d}{16}.
\]
This proves \eqref{eq:direct-D-exp}.

\paragraph{Step 7: the defect and the quotient.}
Write
\[
  \mathcal I=a_2\varepsilon^2+a_3\varepsilon^3+O_d(\varepsilon^4),
  \quad
  \mathcal Q=q_2\varepsilon^2+q_3\varepsilon^3+O_d(\varepsilon^4),
  \quad
  \mathcal D=d_2\varepsilon^2+d_3\varepsilon^3+O_d(\varepsilon^4),
\]
where
\[
  a_2=q_2=d_2=\frac{N_d}{2},
  \quad
  a_3=\frac{3M_d}{4},
  \quad
  q_3=\frac{3M_d}{8},
  \quad
  d_3=-\frac{3M_d}{16}.
\]
Then
\begin{align*}
  \mathcal I\mathcal D-\mathcal Q^2
  &=(a_2d_2-q_2^2)\varepsilon^4
    +(a_2d_3+a_3d_2-2q_2q_3)\varepsilon^5
    +O_d(\varepsilon^6).
\end{align*}
The $\varepsilon^4$ coefficient is zero.  The $\varepsilon^5$ coefficient is
\[
  \frac{N_d}{2}\left(-\frac{3M_d}{16}\right)
  +\frac{3M_d}{4}\frac{N_d}{2}
  -2\frac{N_d}{2}\frac{3M_d}{8}
  =-\frac{3N_dM_d}{32}.
\]
This proves \eqref{eq:direct-defect-expansion}.  Finally,
\[
  \mathcal Q^2=\frac{N_d^2}{4}\varepsilon^4+O_d(\varepsilon^5),
\]
so
\[
  \frac{\mathcal I\mathcal D}{\mathcal Q^2}
  =1-\frac{3M_d}{8N_d}\varepsilon+O_d(\varepsilon^2).
\]
Since
\[
  \frac{M_d}{N_d}
  =\frac{\binom{d+1}{3}}{\binom{d+1}{2}}
  =\frac{d-1}{3},
\]
we obtain \eqref{eq:direct-ratio-expansion}.
\end{proof}

\paragraph{Counting clarification.}
The rank of the frequency lattice is $d$, and $b_1,\ldots,b_d$ are its basis vectors.  However, the
perturbation does not use only these $d$ basis frequencies.  It uses all interval sums
$\xi_{ij}=\sum_{r=i}^{j-1}b_r$, i.e. all shortest nonzero frequency vectors.  Therefore the number of
active cosine modes is
\[
  N_d=\binom{d+1}{2},
\]
not $d$.  The cubic zero modes are indexed by triples $i<j<k$, so their number is
\[
  M_d=\binom{d+1}{3}.
\]
For $d=2$ this gives $N_2=3$ and $M_2=1$, recovering exactly the hexagonal perturbation from the main
two-dimensional construction.

\begin{corollary}[Euclidean realizations of the generalized hexagonal family]\label{cor:direct-lattice-euclidean}
For every fixed $d\ge2$, there exists a smooth, strictly positive, Gaussian-decaying probability density
$F$ on $\mathbb R^d$ arising from the generalized hexagonal frequency-lattice family such that
\[
  \mathcal I[F]\mathcal D[F]-\mathcal Q[F]^2<0.
\]
\end{corollary}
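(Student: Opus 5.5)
The plan is to transfer the torus expansion of Proposition \ref{prop:direct-lattice-expansion} to $\R^d$ by a Gaussian envelope, repeating the argument of Section \ref{sec:transfer}. For $R>0$ and small $\eps>0$, set
\[
  F_{R,\eps}(x)=Z_{R,\eps}^{-1}\exp\Big(\eps\psi_d(x)-\frac{|x|^2}{2R^2}\Big),\qquad x\in\R^d .
\]
This density is smooth, strictly positive and Gaussian-decaying, since $\psi_d$ is bounded. Its logarithmic Hessian is $\Hmat_{F_{R,\eps}}=\eps\mathsf K_d+R^{-2}I_d$, and $\nabla\Hmat_{F_{R,\eps}}=\eps\nabla\mathsf K_d$. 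The proof then has three steps.

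\textbf{Step 1: fix $\eps$.} By \eqref{eq:direct-defect-expansion}, the torus defect is $\Delta_\eps:=\Ical[f_{d,\eps}]\Dcal[f_{d,\eps}]-\Qcal[f_{d,\eps}]^2=-\frac{3N_dM_d}{32}\eps^5+O_d(\eps^6)$. Since $M_d\ge1$ for $d\ge2$, we can fix $\eps>0$ small enough that $\Delta_\eps<0$.

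\textbf{Step 2: the $d$-dimensional averaging lemma.} We need the analogue of Lemma \ref{lem:Gaussian-envelope}: for every continuous $\Lambda_d$-periodic $H$,
\[
  \int_{\R^d}H\,e^{-|x|^2/(2R^2)}\dd x=(2\pi R^2)^{d/2}\langle H\rangle_d+o(R^d).
\]
The proof copies the two-dimensional one.
\begin{itemize}
\item Approximate $H$ uniformly by trigonometric polynomials in the characters $e^{i\xi\cdot x}$, $\xi\in K_d$; these are dense in $C(\T_d)$ by Stone--Weierstrass.
\item Each nonzero character integrates to $(2\pi R^2)^{d/2}e^{-R^2|\xi|^2/2}$ against the Gaussian.
\item This is at most $(2\pi R^2)^{d/2}e^{-R^2/2}$ by the frequency gap \eqref{eq:frequency-gap-direct} of Lemma \ref{lem:active-frequency-geometry}.
\end{itemize}
This gap is the only genuinely $d$-dependent input, and it has already been supplied. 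Taking the ratio for $H=Ge^{\eps\psi_d}$ and $H=e^{\eps\psi_d}$ gives $\mathbb E_R[G]\to\langle Ge^{\eps\psi_d}\rangle_d/\langle e^{\eps\psi_d}\rangle_d$.

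\textbf{Step 3: expand and pass to the limit.} Since $\eps\mathsf K_d$ commutes with $R^{-2}I_d$, expand exactly with $c_R=R^{-2}$:
\begin{itemize}
\item $\tr(\Hmat_{F_{R,\eps}})=\eps\psi_d+dc_R$;
\item $\tr(\Hmat_{F_{R,\eps}}^2)=\eps^2|\mathsf K_d|^2+2c_R\eps\psi_d+dc_R^2$;
\item $2\tr(\Hmat_{F_{R,\eps}}^3)=2\eps^3\tr(\mathsf K_d^3)+6c_R\eps^2|\mathsf K_d|^2+6c_R^2\eps\psi_d+2dc_R^3$.
\end{itemize}
Here we use $\tr\mathsf K_d=-\Delta\psi_d=\psi_d$ by \eqref{eq:direct-minus-laplacian-psi}. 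Every periodic coefficient function is smooth and hence bounded, so every term carrying a power of $c_R$ vanishes as $R\to\infty$. The remaining terms converge by Step 2 to the torus values $\Ical[f_{d,\eps}]$, $\Qcal[f_{d,\eps}]$ and $\Dcal[f_{d,\eps}]$.

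The map $(a,q,\delta)\mapsto a\delta-q^2$ is continuous, so $\Ical\Dcal-\Qcal^2$ for $F_{R,\eps}$ converges to $\Delta_\eps<0$. Hence it is at most $\frac12\Delta_\eps<0$ for all large $R$, and $F=F_{R,\eps}$ is the required density.

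I do not expect a real obstacle. The only point needing care is Step 2, in particular making sure the trigonometric polynomials in the characters of $K_d$ are dense in $C(\T_d)$, which holds because $\Lambda_d=K_d^\vee$. The remaining checks are routine bookkeeping: that $d$ rather than $2$ appears in the trace of $I_d$, and that the normalizations $Z_{R,\eps}$ cancel in the ratios.
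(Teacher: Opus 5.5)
Your proposal is correct and follows the paper's proof. The paper uses the same Gaussian envelope $F_{R,\eps,d}$, runs the averaging argument of Lemma~\ref{lem:Gaussian-envelope} in dimension $d$ using the frequency gap \eqref{eq:frequency-gap-direct}, and concludes that the negative torus defect of Proposition~\ref{prop:direct-lattice-expansion} persists for large $R$. Your version supplies details the paper only asserts: the exact trace expansions with $\tr I_d=d$, and the density of the $K_d$-characters in $C(\T_d)$, which follows from $\Lambda_d=K_d^\vee$.
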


\begin{proof}
For $R>0$, define
\[
  F_{R,\varepsilon,d}(x)
  =Z_{R,\varepsilon,d}^{-1}
   \exp\left(\varepsilon\psi_d(x)-\frac{|x|^2}{2R^2}\right),
  \qquad x\in\mathbb R^d.
\]
With $\mathsf K_d:=-\nabla^2\psi_d$ as in \eqref{eq:direct-K-definition}, its logarithmic Hessian is
\[
  \mathsf H_{F_{R,\varepsilon,d}}
  =\varepsilon\mathsf K_d+R^{-2}I_d.
\]
The constant shift $R^{-2}I_d$ disappears as $R\to\infty$.  Moreover, by the frequency gap
\eqref{eq:frequency-gap-direct}, every nonzero Fourier mode of the lattice is suppressed by a factor
at most $e^{-R^2/2}$ under the Gaussian envelope.  Thus the same averaging argument as in
Lemma~\ref{lem:Gaussian-envelope} gives
\[
  \mathcal I[F_{R,\varepsilon,d}]\to \mathcal I[f_{d,\varepsilon}],
  \qquad
  \mathcal Q[F_{R,\varepsilon,d}]\to \mathcal Q[f_{d,\varepsilon}],
  \qquad
  \mathcal D[F_{R,\varepsilon,d}]\to \mathcal D[f_{d,\varepsilon}]
\]
as $R\to\infty$.  For sufficiently small positive $\varepsilon$,
Proposition~\ref{prop:direct-lattice-expansion} gives a strictly negative torus defect.  The strict
inequality therefore persists for $F_{R,\varepsilon,d}$ when $R$ is sufficiently large.
\end{proof}

\section{Sharp Constants and Open Problems}
\label{sec:open-problems}
Let $\mathcal S(\R^d)$ denote the Schwartz class, namely the smooth functions whose derivatives decay faster than every polynomial rate.  Define the admissible class in dimension $d$ by
\[
        \mathcal A_d
        :=
        \left\{
        f\in\mathcal S(\R^d):
        f>0,\quad
        \int_{\R^d}f\,\dd x=1,\quad
        \Ical[f],\Qcal[f],\Dcal[f]<\infty,\quad
        \Qcal[f]>0
        \right\}.
\]
A member of $\mathcal A_d$ will be called an admissible density.  The functionals are computed with $\Hmat_f=-\nabla^2\log f$ as above.  For $d\ge1$, define the best constant
\begin{equation}\label{eq:theta-d}
        \theta_d^*=
        \inf_{f\in\mathcal A_d}\frac{\Ical[f]\Dcal[f]}{\Qcal[f]^2},
\end{equation}
This quantity is the sharp constant in the family of inequalities
\[
        \Ical[f]\Dcal[f]\ge \theta\,\Qcal[f]^2,
\]
and therefore provides a natural quantitative replacement for the false log-convexity conjecture in dimensions $d\ge2$.

\subsection{Low-dimensional sharp constants and monotonicity}

The one-dimensional log-convexity theorem of Ledoux, Nair, and Wang yields the lower bound $\theta_1^*\ge1$.  The constant $1$ is in fact sharp.

\begin{theorem}[Sharp constant in dimension one]\label{prop:theta-one}
One has
\[
        \theta_1^*=1.
\]
\end{theorem}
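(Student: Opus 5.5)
The proof splits into a lower bound and a matching upper bound.

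\emph{Lower bound.} I will use the one-dimensional theorem of Ledoux, Nair, and Wang \cite{LedouxNairWang}. For any $f\in\mathcal A_1$, let $f_t=P_tf$. Their result gives $I(t)I''(t)-I'(t)^2\ge0$ for $t>0$. By Lemma \ref{lem:identities} this is $\Ical[f_t]\Dcal[f_t]-\Qcal[f_t]^2\ge0$. Letting $t\downarrow0$, using continuity of $\Ical,\Qcal,\Dcal$ along the heat flow at $t=0$ for Schwartz data with finite functionals (as in the proof of Theorem \ref{thm:main}), gives $\Ical[f]\Dcal[f]\ge\Qcal[f]^2$. Since $\Qcal[f]>0$, this yields $\theta_1^*\ge1$. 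If continuity at $t=0$ is delicate for a general admissible $f$, I would instead invoke the static form of the Ledoux--Nair--Wang inequality directly, which their proof in fact establishes.

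\emph{Upper bound.} I need admissible densities whose ratio tends to $1$, and the natural candidates are Gaussians. For $f$ Gaussian with variance $\sigma^2$, $\Hmat_f=\sigma^{-2}$ is constant. This gives $\Ical=\sigma^{-2}$, $\Qcal=\sigma^{-4}$, $\Dcal=2\sigma^{-6}$, so the ratio is $2$, which is not sharp. So I will take a one-dimensional analogue of the torus family: $f_{R,\eps}\propto\exp(\eps\cos x-x^2/(2R^2))$. On the circle, with $\phi=\cos x$, the expansions of Proposition \ref{prop:torus} become
\[
\Ical=\tfrac12\eps^2+O(\eps^4),\qquad \Qcal=\tfrac12\eps^2+O(\eps^4),\qquad \Dcal=\tfrac12\eps^2+O(\eps^4).
\]
The cubic terms vanish because there is no closed triad in one dimension: $\av{\cos^3}=0$. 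Hence the ratio is $1+O(\eps^2)$ on the circle. The Gaussian-envelope transfer of Lemma \ref{lem:Gaussian-envelope} and Proposition \ref{prop:euclidean}, whose argument applies verbatim on $\R$ with period lattice $2\pi\Z$, then produces Schwartz admissible densities on $\R$ with ratio within $o(1)$ of the circle value as $R\to\infty$. Letting $\eps\to0$ after $R\to\infty$ gives $\theta_1^*\le1$. These densities are admissible: they are positive, Schwartz, have finite functionals, and $\Qcal>0$.

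\emph{Main obstacle.} The hard part is making sure the $O(\eps^4)$ remainders do not spoil the ratio. Since all three leading coefficients equal $\tfrac12$, the ratio is $(1+O(\eps^2))(1+O(\eps^2))/(1+O(\eps^2))^2\to1$ and no cancellation issue arises. What remains is checking that the $R^{-2}$ corrections vanish at fixed $\eps$, exactly as in \eqref{eq:I-euclidean-expanded}--\eqref{eq:D-euclidean-expanded}.
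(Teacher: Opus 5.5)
Your proposal is correct and follows essentially the same route as the paper: the Ledoux--Nair--Wang inequality gives $\theta_1^*\ge1$. For the upper bound you use the same near-flat family $\exp(\eps\cos x-x^2/(2R^2))$, with circle expansions $\Ical,\Qcal,\Dcal=\frac12\eps^2+O(\eps^4)$, the Gaussian-envelope transfer, and a diagonal limit $\eps_n\downarrow0$, $R_n\to\infty$ to get $\theta_1^*\le1$. The only small addition is your explicit $t\downarrow0$ passage from the dynamic to the static form of the Ledoux--Nair--Wang inequality, which the paper simply takes as given.
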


\begin{proof}
The one-dimensional log-convexity theorem of \cite{LedouxNairWang} gives
\[
        \Ical[f]\Dcal[f]\ge\Qcal[f]^2
\]
for every admissible density $f$ on $\R$.  Hence $\theta_1^*\ge1$.

To prove the reverse inequality, begin on the flat torus $\T=\R/2\pi\Z$.  For $\eps\in\R$, set
\[
        Z_\eps=\av{e^{\eps\cos x}},
        \qquad
        f_\eps^{\T}(x)=Z_\eps^{-1}e^{\eps\cos x},
        \qquad x\in\T,
\]
where $\av{\cdot}$ denotes normalized Haar average on $\T$.  Since
\[
        \log f_\eps^{\T}(x)=\eps\cos x-\log Z_\eps,
\]
the one-dimensional logarithmic Hessian is
\[
        \Hmat_{f_\eps^{\T}}(x)=\eps\cos x.
\]
Therefore
\[
        \Ical_{\T}(\eps)
        :=
        \Ical[f_\eps^{\T}]
        =
        \frac{\av{\eps\cos x\,e^{\eps\cos x}}}{Z_\eps},
\]
\[
        \Qcal_{\T}(\eps)
        :=
        \Qcal[f_\eps^{\T}]
        =
        \frac{\av{\eps^2\cos^2x\,e^{\eps\cos x}}}{Z_\eps},
\]
and
\[
        \Dcal_{\T}(\eps)
        :=
        \Dcal[f_\eps^{\T}]
        =
        \frac{\av{\bigl(\eps^2\sin^2x+2\eps^3\cos^3x\bigr)e^{\eps\cos x}}}{Z_\eps}.
\]
Using
\[
        \av{\cos x}=0,
        \qquad
        \av{\cos^2x}=\av{\sin^2x}=\frac12,
        \qquad
        \av{\cos^3x}=0,
\]
and Taylor expansion of $e^{\eps\cos x}$ at $\eps=0$, we obtain
\[
        Z_\eps=1+O(\eps^2),
\]
\[
        \av{\cos x\,e^{\eps\cos x}}
        =
        \eps\,\av{\cos^2x}+O(\eps^3)
        =
        \frac{\eps}{2}+O(\eps^3),
\]
\[
        \av{\cos^2x\,e^{\eps\cos x}}
        =
        \av{\cos^2x}+O(\eps^2)
        =
        \frac12+O(\eps^2),
\]
and
\[
        \av{\bigl(\sin^2x+2\eps\cos^3x\bigr)e^{\eps\cos x}}
        =
        \av{\sin^2x}+O(\eps^2)
        =
        \frac12+O(\eps^2).
\]
Hence
\[
        \Ical_{\T}(\eps)=\frac12\eps^2+O(\eps^4),
        \qquad
        \Qcal_{\T}(\eps)=\frac12\eps^2+O(\eps^4),
        \qquad
        \Dcal_{\T}(\eps)=\frac12\eps^2+O(\eps^4),
\]
and therefore
\begin{equation}\label{eq:theta-one-torus}
        \frac{\Ical_{\T}(\eps)\Dcal_{\T}(\eps)}{\Qcal_{\T}(\eps)^2}
        =
        1+O(\eps^2)
        \qquad (\eps\downarrow0).
\end{equation}

We next transfer this family to $\R$.  For $R>0$, define
\[
        F_{R,\eps}(x)=Z_{R,\eps}^{-1}\exp\left(\eps\cos x-\frac{x^2}{2R^2}\right),
        \qquad x\in\R,
\]
where
\[
        Z_{R,\eps}=\int_\R e^{\eps\cos x}e^{-x^2/(2R^2)}\dd x.
\]
Then $F_{R,\eps}$ is a smooth strictly positive Gaussian-decaying density on $\R$, and
\[
        \Hmat_{F_{R,\eps}}(x)=\eps\cos x+R^{-2},
        \qquad
        \partial_x\Hmat_{F_{R,\eps}}(x)=-\eps\sin x.
\]
If $G$ is any smooth $2\pi$-periodic function with Fourier series
\[
        G(x)=\sum_{m\in\Z}\widehat G(m)e^{imx},
\]
then
\begin{equation}\label{eq:one-d-gaussian-average}
        \int_\R G(x)e^{-x^2/(2R^2)}\dd x
        =
        \sqrt{2\pi}\,R\sum_{m\in\Z}\widehat G(m)e^{-m^2R^2/2}
        =
        \sqrt{2\pi}\,R\,\av{G}+o(R)
        \qquad (R\to\infty).
\end{equation}
Applying \eqref{eq:one-d-gaussian-average} to $e^{\eps\cos x}$ and to the periodic integrands defining $\Ical[F_{R,\eps}]$, $\Qcal[F_{R,\eps}]$, and $\Dcal[F_{R,\eps}]$, and noting that the shift $R^{-2}$ in $\Hmat_{F_{R,\eps}}$ contributes only $o(1)$ after normalization, yields
\[
        \Ical[F_{R,\eps}]\to \Ical_{\T}(\eps),
        \qquad
        \Qcal[F_{R,\eps}]\to \Qcal_{\T}(\eps),
        \qquad
        \Dcal[F_{R,\eps}]\to \Dcal_{\T}(\eps)
        \qquad (R\to\infty)
\]
for every fixed $\eps$.  Hence
\[
        \lim_{R\to\infty}
        \frac{\Ical[F_{R,\eps}]\Dcal[F_{R,\eps}]}{\Qcal[F_{R,\eps}]^2}
        =
        \frac{\Ical_{\T}(\eps)\Dcal_{\T}(\eps)}{\Qcal_{\T}(\eps)^2}.
\]
Combining this with \eqref{eq:theta-one-torus}, we may choose a sequence $\eps_n\downarrow0$ and then $R_n\to\infty$ such that
\[
        \frac{\Ical[F_{R_n,\eps_n}]\Dcal[F_{R_n,\eps_n}]}{\Qcal[F_{R_n,\eps_n}]^2}
        \longrightarrow1.
\]
Since each $F_{R_n,\eps_n}$ is admissible on $\R$, this proves $\theta_1^*\le1$.  Together with $\theta_1^*\ge1$, the conclusion follows.
\end{proof}

For the centered Gaussian density with variance $\sigma^2$, one has $\Hmat_f=\sigma^{-2}$, hence
\[
        \Ical[f]=\sigma^{-2},
        \qquad
        \Qcal[f]=\sigma^{-4},
        \qquad
        \Dcal[f]=2\sigma^{-6},
\]
and therefore
\[
        \frac{\Ical[f]\Dcal[f]}{\Qcal[f]^2}=2.
\]
Thus the sharp constant $\theta_1^*=1$ is not attained by the Gaussian family; it is approached instead by near-flat unit-frequency perturbations.

Corollary \ref{cor:high-d} gives
\begin{equation}\label{eq:theta-less-one}
        \theta_d^*<1,
        \qquad d\ge2.
\end{equation}
In dimension two, combining Corollary \ref{cor:high-d} with the square-root convexity theorem of Liu and Gao \cite{LiuGao2023} yields
\begin{equation}\label{eq:theta-two-range}
        \frac12\le \theta_2^*<1.
\end{equation}
This is the first rigorous localization of a nontrivial sharp constant beyond dimension one.  At the same time, the upper bound in \eqref{eq:theta-two-range} is only qualitative: the hexagonal perturbation constructed in the present paper shows merely that $\theta_2^*$ lies strictly below $1$, and it leaves open the possibility that a nonperturbative family may drive the quotient substantially lower.

We next record a general structural property of the sequence $\{\theta_d^*\}_{d\ge1}$.

\begin{proposition}[Monotonicity in the dimension]\label{prop:theta-monotone}
For every $d\ge1$,
\begin{equation}\label{eq:theta-monotone}
        \theta_{d+1}^*\le \theta_d^*.
\end{equation}
\end{proposition}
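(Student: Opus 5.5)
The plan is to reuse the tensorization argument from the proof of Corollary~\ref{cor:high-d}, now starting from an arbitrary admissible density in dimension $d$. Fix $\eta>0$ and choose $f\in\mathcal A_d$ with
\[
\frac{\Ical[f]\Dcal[f]}{\Qcal[f]^2}\le\theta_d^*+\eta .
\]
For $\sigma>0$ let $g_\sigma$ be the centered Gaussian density on $\R$ with variance $\sigma^2$, and set
\[
h_\sigma(x,y)=f(x)g_\sigma(y),\qquad x\in\R^d,\ y\in\R .
\]

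\textbf{Step 1: additivity of the functionals.} Since $\log h_\sigma=\log f(x)+\log g_\sigma(y)$, the logarithmic Hessian $\Hmat_{h_\sigma}$ is block diagonal, with blocks $\Hmat_f(x)$ and the constant $\sigma^{-2}$. Consequently $\tr(\Hmat^k)$ splits as a sum over the two blocks, and so does $|\nabla\Hmat|^2$: the Gaussian block is constant, and the $f$-block depends only on $x$. Integrating against the product density, with each factor having unit mass, gives
\[
\Ical[h_\sigma]=\Ical[f]+\sigma^{-2},\qquad
\Qcal[h_\sigma]=\Qcal[f]+\sigma^{-4},\qquad
\Dcal[h_\sigma]=\Dcal[f]+2\sigma^{-6}.
\]
These are the one-dimensional Gaussian values recorded after Theorem~\ref{prop:theta-one}.

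\textbf{Step 2: passing to the limit.} Letting $\sigma\to\infty$, the three functionals converge to those of $f$. Because $\Qcal[f]>0$, the quotient $\Ical\Dcal/\Qcal^2$ is continuous at that point, so
\[
\frac{\Ical[h_\sigma]\Dcal[h_\sigma]}{\Qcal[h_\sigma]^2}\to\frac{\Ical[f]\Dcal[f]}{\Qcal[f]^2}.
\]
Hence $\theta_{d+1}^*\le\theta_d^*+2\eta$ once $\sigma$ is large. Since $\eta>0$ was arbitrary, $\theta_{d+1}^*\le\theta_d^*$.

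\textbf{Step 3: admissibility of $h_\sigma$ (the main obstacle).} One must check that $h_\sigma\in\mathcal A_{d+1}$. Positivity, smoothness, unit mass, finiteness of the three functionals (by Step 1) and $\Qcal[h_\sigma]>0$ are immediate. The delicate point is Schwartz decay on $\R^{d+1}$: every mixed derivative factors as $\partial^\alpha f(x)\,\partial^\beta g_\sigma(y)$, and $\sup|x|^a|y|^b|\partial^\alpha f|\,|\partial^\beta g_\sigma|$ is finite because each factor is Schwartz in its own variable. Polynomial weights in $(x,y)$ are dominated by products of such separate weights, so $h_\sigma\in\mathcal S(\R^{d+1})$.

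If the intended definition of the infimum includes $\theta_d^*=-\infty$, or if $\Qcal[f]=0$ needs to be excluded, the argument is unaffected: the approximation in Step 2 only requires the near-minimizer $f$ to satisfy $\Qcal[f]>0$, which is built into the definition of $\mathcal A_d$.
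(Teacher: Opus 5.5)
Your proposal is correct and follows the paper's proof essentially verbatim: you tensorize a near-minimizer with a broad one-dimensional Gaussian, use the block-diagonal additivity $\Ical[f]+\sigma^{-2}$, $\Qcal[f]+\sigma^{-4}$, $\Dcal[f]+2\sigma^{-6}$, and let $\sigma\to\infty$. Your explicit Schwartz-class check is more careful than the paper's bare assertion of admissibility, and for the case $\theta_d^*=-\infty$ you should simply replace $\theta_d^*+\eta$ by an arbitrary level $-M$.
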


\begin{proof}
Fix $d\ge1$ and let $f$ be an admissible density on $\R^d$.  For $\sigma>0$, let
\[
        G_\sigma(y)=(2\pi\sigma^2)^{-1/2}e^{-y^2/(2\sigma^2)},
        \qquad y\in\R,
\]
and define
\[
        F_\sigma(x,y)=f(x)G_\sigma(y),
        \qquad (x,y)\in\R^d\times\R.
\]
Then $F_\sigma$ is an admissible density on $\R^{d+1}$.  Since
\[
        \log F_\sigma(x,y)=\log f(x)-\frac{y^2}{2\sigma^2}-\frac12\log(2\pi\sigma^2),
\]
its logarithmic Hessian is block diagonal:
\[
        \Hmat_{F_\sigma}(x,y)=
        \begin{pmatrix}
                \Hmat_f(x) & 0\\
                0 & \sigma^{-2}
        \end{pmatrix}.
\]
Moreover, $\nabla\Hmat_{F_\sigma}$ has the same $x$-derivatives as $\nabla\Hmat_f$, because the Gaussian block is constant.  Therefore
\[
        \Ical[F_\sigma]=\Ical[f]+\sigma^{-2},
\]
\[
        \Qcal[F_\sigma]=\Qcal[f]+\sigma^{-4},
\]
and
\[
        \Dcal[F_\sigma]=\Dcal[f]+2\sigma^{-6}.
\]
Hence
\[
        \frac{\Ical[F_\sigma]\Dcal[F_\sigma]}{\Qcal[F_\sigma]^2}
        \longrightarrow
        \frac{\Ical[f]\Dcal[f]}{\Qcal[f]^2}
        \qquad (\sigma\to\infty).
\]
Fix $\eta>0$.  Choose $f$ so that
\[
        \frac{\Ical[f]\Dcal[f]}{\Qcal[f]^2}<\theta_d^*+\eta.
\]
For $\sigma$ large enough,
\[
        \frac{\Ical[F_\sigma]\Dcal[F_\sigma]}{\Qcal[F_\sigma]^2}
        <\theta_d^*+2\eta.
\]
Since $F_\sigma$ is admissible in dimension $d+1$, we get
\[
        \theta_{d+1}^*\le\theta_d^*+2\eta.
\]
Letting $\eta\downarrow0$ proves \eqref{eq:theta-monotone}.
\end{proof}

It follows from \eqref{eq:theta-monotone} that the limit
\begin{equation}\label{eq:theta-infty}
        \theta_\infty^*:=\inf_{d\ge2}\theta_d^*=\lim_{d\to\infty}\theta_d^*
\end{equation}
exists in the extended interval $[-\infty,1)$.  Determining the exact values of the constants $\theta_d^*$ is therefore a natural replacement for the false log-convexity conjecture.  The monotonicity argument above gives only the soft upper bound $\theta_\infty^*\le\theta_2^*<1$, and its proof uses only broad-Gaussian tensorization, so it does not capture any intrinsically high-dimensional closed-frequency interaction.

\subsection{A global dichotomy governed by Fisher convexity}

We next isolate the basic dichotomy governing possible lower bounds for the sharp constants.  The decisive quantity is the sign of $\Dcal$ itself, rather than the sharper defect $\Ical\Dcal-\Qcal^2$.

\begin{theorem}[A dichotomy for $\theta_\infty^*$]\label{prop:theta-dichotomy}
The following statements hold.
\begin{enumerate}[label=(\alph*)]
\item If $\Dcal[f]\ge0$ for every admissible density $f$ in every dimension $d\ge2$, then
\[
        0\le \theta_\infty^*<1.
\]
\item If there exist $d_0\ge1$ and an admissible density $g$ on $\R^{d_0}$ such that
\[
        \Dcal[g]<0,
\]
then
\[
        \theta_d^*=-\infty
        \qquad\text{for every }d\ge d_0.
\]
In particular, $\theta_\infty^*=-\infty$.
\end{enumerate}
\end{theorem}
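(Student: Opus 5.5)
For part (a), I would use that $\Ical[f]>0$ and $\Qcal[f]>0$ for every admissible $f$. Positivity of $\Ical$ holds because Fisher information vanishes only for constant densities, which are not integrable on $\R^d$; positivity of $\Qcal$ is part of the definition of $\mathcal A_d$. If $\Dcal[f]\ge0$ for all admissible $f$, every quotient $\Ical\Dcal/\Qcal^2$ is nonnegative, so $\theta_d^*\ge0$ for each $d\ge2$. Passing to the limit in \eqref{eq:theta-infty} gives $\theta_\infty^*\ge0$. The upper bound $\theta_\infty^*\le\theta_2^*<1$ is already recorded after Proposition~\ref{prop:theta-monotone}, using \eqref{eq:theta-less-one}. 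Part (a) is essentially immediate.

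For part (b), the plan is a scaling argument. Given $g$ on $\R^{d_0}$ with $\Dcal[g]<0$, consider the dilations $g_\lambda(x)=\lambda^{d_0}g(\lambda x)$. These remain admissible, and since $\Hmat_{g_\lambda}(x)=\lambda^2\Hmat_g(\lambda x)$, the functionals scale as $\Ical\sim\lambda^2$, $\Qcal\sim\lambda^4$ and $\Dcal\sim\lambda^6$. The quotient is therefore scale invariant, so scaling alone cannot make it unbounded; a second ingredient is needed.

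That ingredient is tensorization with a Gaussian, as in the proof of Proposition~\ref{prop:theta-monotone}, but with the dimensions balanced so that $\Ical$ stays bounded while $\Dcal$ stays negative and $\Qcal$ stays controlled. The cleanest choice is to take a product $F=g_\lambda\otimes G_\sigma$ on $\R^{d_0+1}$ and use additivity of the three functionals:
\[
\Ical[F]=\lambda^2\Ical[g]+\sigma^{-2},\quad \Qcal[F]=\lambda^4\Qcal[g]+\sigma^{-4},\quad \Dcal[F]=\lambda^6\Dcal[g]+2\sigma^{-6}.
\]
This product does not work directly. Fixing $\lambda=1$ and letting $\sigma\to\infty$ sends the quotient to $\Ical[g]\Dcal[g]/\Qcal[g]^2$, which is a finite negative number. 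I would instead rely on the sign of $\Dcal$ itself: the quotient is $-\infty$ exactly when one can make $\Ical\cdot|\Dcal|/\Qcal^2$ arbitrarily large while $\Dcal<0$.

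The idea is to take a product $g\otimes g_\lambda$ of two rescaled copies, or more generally to combine pieces at different scales, so that the $\lambda^6\Dcal$ term dominates the $\Dcal$ part while the $\Qcal^2$ term does not dominate as strongly. Set $a=\Ical[g]$, $q=\Qcal[g]$ and $D=\Dcal[g]<0$, and take the product of $g$ with a rescaled copy $g_\lambda$, possibly in dimension $2d_0$. Then
\[
\frac{\Ical\Dcal}{\Qcal^2}=\frac{a(1+\lambda^2)\,D(1+\lambda^6)}{q^2(1+\lambda^4)^2},
\]
which is asymptotic to $\lambda^4 aD/q^2\to-\infty$ as $\lambda\to\infty$. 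This requires dimension $2d_0$. To stay at every $d\ge d_0$, I would instead mix $g_\lambda$ with a broad Gaussian in one extra coordinate, with $\sigma\approx\lambda^{-1}$ tuned appropriately. The main obstacle is precisely this bookkeeping in the smallest dimensions: landing exactly in dimension $d_0$, rather than $d_0+1$ or $2d_0$, may require a non-product construction or a suitably reinterpreted statement. If dimension $d_0$ cannot be reached directly, Proposition~\ref{prop:theta-monotone} only propagates the result upward, so I would first try to establish $\theta_{d}^*=-\infty$ in the smallest achievable dimension and then extend it to all larger $d$ by monotonicity.
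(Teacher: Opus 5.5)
Your part (a) is fine and matches the paper. Part (b), however, has a genuine gap. The displayed computation for $g\otimes g_\lambda$ is wrong: $(1+\lambda^2)(1+\lambda^6)$ and $(1+\lambda^4)^2$ both grow like $\lambda^8$, so the quotient tends to $aD/q^2$, a finite negative number, not to $\lambda^4aD/q^2$.

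This is not a fixable slip. For any product of a fixed number of rescaled copies of $g$ and Gaussian factors, the three functionals add with unit weights. Writing $t_i=\lambda_i^2$ and $s_j=\sigma_j^{-2}$, the quotient is
\[
\frac{\bigl(a\sum_i t_i+\sum_j s_j\bigr)\bigl(D\sum_i t_i^3+2\sum_j s_j^3\bigr)}{\bigl(q\sum_i t_i^2+\sum_j s_j^2\bigr)^2}.
\]
This is a ratio of homogeneous forms of degree $4$. On the simplex $\sum_i t_i+\sum_j s_j=1$ the denominator is bounded below by a positive constant, since $q>0$, so the quotient is bounded. Your variant $g_\lambda\otimes G_\sigma$ with $\sigma\approx\lambda^{-1}$ is a special case: by scale invariance it depends only on $\lambda\sigma$ and stays bounded. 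Products can blow up only if the number of factors grows. That would at best give $\theta_\infty^*=-\infty$, not $\theta_{d_0}^*=-\infty$, and you yourself leave dimension $d_0$ unresolved.

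The missing idea is to use a \emph{mixture} instead of a product, because in a mixture the weight of a component is a free parameter independent of its scale. The paper works in $\R^{d_0}$ itself with
\[
F=(1-\eta)h+\eta\,g_r(\cdot-Le_1),
\]
where $h$ is the standard Gaussian and $g_r(x)=r^{-d_0}g(x/r)$. Sending $L\to\infty$ decouples the pointwise integrands, which satisfy $J_\nu[cf]=cJ_\nu[f]$. Hence $\Ical,\Qcal,\Dcal$ become asymptotically additive with weights $1-\eta$ and $\eta$.

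Combine this with the scaling laws $\Ical[g_r]=r^{-2}\Ical[g]$, $\Qcal[g_r]=r^{-4}\Qcal[g]$, $\Dcal[g_r]=r^{-6}\Dcal[g]$, and choose $\eta=r^3$. Then $\Ical\to\Ical[h]$, $\Qcal\sim r^{-1}\Qcal[g]$ and $\Dcal\sim r^{-3}\Dcal[g]$. The quotient therefore behaves like $r^{-1}\,\Ical[h]\Dcal[g]/\Qcal[g]^2\to-\infty$. This gives $\theta_{d_0}^*=-\infty$ directly, and Proposition~\ref{prop:theta-monotone} then propagates it to all $d\ge d_0$.

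Your target of making $\Ical\,|\Dcal|/\Qcal^2$ large was the right one. The small mass $\eta$ is exactly what keeps $\Ical$ of order one while $\Qcal$ and $\Dcal$ blow up at different rates. Equal-weight tensor sums cannot separate these scales.
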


The mechanism behind part (b) is a separated-mixture construction.  One starts from a fixed background density $h$, inserts a very narrow copy of a ``bad'' density $g$ at a distant location, and then tunes three parameters: the small scale $r$, the small mass $\eta$, and the large separation distance $L$.  The role of $r$ is to amplify $\Qcal$ and $\Dcal$ through their scaling laws; the role of $\eta$ is to keep the perturbation negligible at the level of $\Ical$; and the role of $L$ is to decouple the two components so that the functionals become asymptotically additive.

\begin{figure}[H]
\centering
\begin{tikzpicture}[x=1cm,y=1cm]
\begin{scope}
\clip (-0.39,0) rectangle (10.39,2.35);
\draw[thick,blue,smooth,variable=\x,domain=-0.39:10.39,samples=300]
  plot ({\x},{2.15*exp(-0.5*((\x-2.30)/0.90)^2)});
\draw[thick,red,smooth,variable=\x,domain=-0.39:10.39,samples=300]
  plot ({\x},{1.85*exp(-0.5*((\x-8.25)/0.23)^2)});
\end{scope}
\draw[->] (-0.4,0) -- (10.4,0) node[right] {$x_1$};
\draw[dashed] (2.3,0) -- (2.3,2.35);
\draw[dashed] (8.25,0) -- (8.25,2.0);
\draw[<->] (2.3,-0.45) -- node[fill=white,inner sep=1pt] {$L$} (8.25,-0.45);
\draw[<->] (7.95,0.4) -- node[fill=white,inner sep=1pt] {$r$} (8.55,0.4);
\node[blue] at (2.3,2.6) {$h$};
\node[red] at (8.25,2.25) {$\eta\,g_{r,L}$};
\node at (5.3,3.0) {$F_{r,\eta,L}=(1-\eta)h+\eta g_{r,L}$};
\end{tikzpicture}
\caption{Schematic of the separated-mixture construction in Theorem \ref{prop:theta-dichotomy}(b).  The broad background controls $\Ical$, while a tiny narrow bump with mass $\eta$, scale $r$, and separation $L$ can dominate $\Qcal$ and $\Dcal$ after rescaling.  Both component profiles have rapidly decaying, noncompact tails.}
\label{fig:dichotomy-mixture}
\end{figure}

\begin{proof}
Part (a) is immediate: $\Ical[f]\ge0$ for every admissible density $f$, and $\Qcal[f]^2>0$ whenever $\Qcal[f]>0$, so universal nonnegativity of $\Dcal[f]$ implies
\[
        \frac{\Ical[f]\Dcal[f]}{\Qcal[f]^2}\ge0
\]
for every admissible $f$ in every dimension $d\ge2$.  Hence $\theta_d^*\ge0$ for all $d\ge2$, and therefore $0\le\theta_\infty^*$.  The strict upper bound $\theta_\infty^*<1$ still follows from \eqref{eq:theta-less-one}.

For part (b), fix such a density $g$ on $\R^{d_0}$ and let $h(x)=(2\pi)^{-d_0/2}e^{-|x|^2/2}$ be the standard Gaussian.  We use three parameters, as in Figure \ref{fig:dichotomy-mixture}: $r>0$ is the spatial scale of the bad bump, $\eta\in(0,1)$ is its total mass, and $L>0$ is the separation distance from the background.  For $r>0$, define
\[
        g_r(x)=r^{-d_0}g(x/r).
\]
Since
\[
        \log g_r(x)=\log g(x/r)-d_0\log r,
\]
we have
\[
        \Hmat_{g_r}(x)=r^{-2}\Hmat_g(x/r),
        \qquad
        \nabla\Hmat_{g_r}(x)=r^{-3}(\nabla\Hmat_g)(x/r).
\]
Changing variables therefore gives the scaling laws
\begin{equation}\label{eq:negative-D-scaling}
        \Ical[g_r]=r^{-2}\Ical[g],
        \qquad
        \Qcal[g_r]=r^{-4}\Qcal[g],
        \qquad
        \Dcal[g_r]=r^{-6}\Dcal[g].
\end{equation}

Fix $\eta\in(0,1)$ and $L>0$, set
\[
        g_{r,L}(x)=g_r(x-Le_1),
        \qquad
        F_{r,\eta,L}(x)=(1-\eta)h(x)+\eta g_{r,L}(x),
\]
and write
\[
        J_1[f]=\tr(\Hmat_f)f,
        \qquad
        J_2[f]=\tr(\Hmat_f^2)f,
        \qquad
        J_3[f]=\Big(|\nabla\Hmat_f|^2+2\tr(\Hmat_f^3)\Big)f
\]
for the pointwise integrands of $\Ical$, $\Qcal$, and $\Dcal$.  Each $J_\nu[f](x)$ depends smoothly on the value of $f$ and its partial derivatives up to order three at $x$ whenever $f(x)>0$, and satisfies the homogeneity relation
\[
        J_\nu[cf]=c\,J_\nu[f]
        \qquad (c>0),
\]
because $\Hmat_{cf}=\Hmat_f$ and $\nabla\Hmat_{cf}=\nabla\Hmat_f$.

We first establish the asymptotic additivity formulas.  For each fixed $r>0$ and $\eta\in(0,1)$,
\begin{equation}\label{eq:separated-mixture-I}
        \Ical[F_{r,\eta,L}]
        =
        (1-\eta)\Ical[h]+\eta\Ical[g_r]+o_L(1),
\end{equation}
\begin{equation}\label{eq:separated-mixture-Q}
        \Qcal[F_{r,\eta,L}]
        =
        (1-\eta)\Qcal[h]+\eta\Qcal[g_r]+o_L(1),
\end{equation}
and
\begin{equation}\label{eq:separated-mixture-D}
        \Dcal[F_{r,\eta,L}]
        =
        (1-\eta)\Dcal[h]+\eta\Dcal[g_r]+o_L(1)
\end{equation}
as $L\to\infty$.  This is the only point at which the separation parameter $L$ is used.  To prove these relations, fix $\nu\in\{1,2,3\}$ and $\varepsilon>0$.  Since $J_\nu[h]$ and $J_\nu[g_r]$ are integrable, we may choose $R$ so large that
\[
        \int_{|x|>R}|J_\nu[h](x)|\dd x
        +
        \int_{|x|>R}|J_\nu[g_r](x)|\dd x
        <\varepsilon.
\]
For $L>4R$, the balls $B(0,R)$ and $B(Le_1,R)$ are disjoint.  On $B(0,R)$, rapid decay of $g_r$ implies that $g_{r,L}$ and its derivatives up to order three are $O_N(L^{-N})$ for every $N$, uniformly in $x\in B(0,R)$.  Since $h$ is strictly positive on the compact ball $B(0,R)$, smooth dependence on the jet of the density gives
\[
        J_\nu[F_{r,\eta,L}](x)
        =
        (1-\eta)J_\nu[h](x)+O_N(L^{-N})
        \qquad \text{uniformly on }B(0,R).
\]
After translating by $Le_1$, the same argument on $B(Le_1,R)$ gives
\[
        J_\nu[F_{r,\eta,L}](x)
        =
        \eta J_\nu[g_r(x-Le_1)]+O_N(L^{-N})
        \qquad \text{uniformly on }B(Le_1,R).
\]
On the complement of these two balls, the rapid decay of $h$ and $g_{r,L}$, together with the smooth rational dependence of $J_\nu$ on the jet of the density, provides an integrable majorant independent of $L$.  Its integral can be made $O(\varepsilon)$ by the choice of $R$.  Integrating over the three regions, then sending $L\to\infty$ and finally $\varepsilon\downarrow0$, proves \eqref{eq:separated-mixture-I}--\eqref{eq:separated-mixture-D}.

With the additivity formulas in hand, choose $\eta=r^3$.  This choice keeps $\Ical$ at order one while forcing $\Qcal$ and $\Dcal$ to scale like $r^{-1}$ and $r^{-3}$, respectively.  For each sufficiently small $r>0$, choose $L=L(r)$ large enough that the three $o_L(1)$ terms above are $o_r(1)$ as $r\downarrow0$.  Then \eqref{eq:negative-D-scaling} yields
\[
        \Ical[F_{r,r^3,L(r)}]
        =
        \Ical[h]+r\,\Ical[g]+o_r(1)
        =
        \Ical[h]+o_r(1),
\]
\[
        \Qcal[F_{r,r^3,L(r)}]
        =
        \Qcal[h]+r^{-1}\Qcal[g]+o_r(1)
        =
        r^{-1}\Qcal[g]+O(1),
\]
and
\[
        \Dcal[F_{r,r^3,L(r)}]
        =
        \Dcal[h]+r^{-3}\Dcal[g]+o_r(1)
        =
        r^{-3}\Dcal[g]+O(1).
\]
Therefore
\[
        \frac{\Ical[F_{r,r^3,L(r)}]\Dcal[F_{r,r^3,L(r)}]}
             {\Qcal[F_{r,r^3,L(r)}]^2}
        =
        \frac{\bigl(\Ical[h]+o_r(1)\bigr)\bigl(r^{-3}\Dcal[g]+O(1)\bigr)}
             {\bigl(r^{-1}\Qcal[g]+O(1)\bigr)^2}
        =
        \frac{\Ical[h]\Dcal[g]}{\Qcal[g]^2}\,r^{-1}+o(r^{-1}).
\]
Since $\Dcal[g]<0$, the right-hand side tends to $-\infty$ as $r\downarrow0$.  Hence $\theta_{d_0}^*=-\infty$.  Finally, Proposition \ref{prop:theta-monotone} implies $\theta_d^*=-\infty$ for every $d\ge d_0$.
\end{proof}

By Lemma \ref{lem:identities}, the sign condition $\Dcal[f]\ge0$ is equivalent to convexity of Fisher information along heat flow.  In other words, the first barrier to any nontrivial lower bound on $\theta_d^*$ is the ordinary convexity problem for Fisher information itself.  If one could prove $\Dcal[f]\ge0$ for all admissible densities in all dimensions, then Theorem \ref{prop:theta-dichotomy}(a) would give $\theta_\infty^*\in[0,1)$.  If one could instead find a single admissible density with $\Dcal[f]<0$, then Theorem \ref{prop:theta-dichotomy}(b) would force the entire high-dimensional problem to collapse to $\theta_\infty^*=-\infty$.

At present, the known status is dimension-dependent.  In dimension one, $\Dcal[f]\ge0$ follows from the log-convexity theorem of \cite{LedouxNairWang}.  In dimension two, it follows already from the square-root convexity theorem of Liu and Gao \cite{LiuGao2023}.  More generally, Guo, Yuan, and Gao proved the completely monotone conjecture for order $\ell=2$ in dimensions $d=2,3,4$ \cite{GuoYuanGao2022}; see also the review of Ledoux \cite{LedouxReview}, where this result is recorded together with the identity
\[
        I''(t)=\int_{\R^n} f_t\Bigl(|\nabla^3 v_t|^2-2T_3(v_t)\Bigr)\dd x,
        \qquad v_t=\log f_t,
\]
where
\[
        T_3(v):=\sum_{i,j,k=1}^n
        (\partial_{ij}v)(\partial_{ik}v)(\partial_{jk}v)
        =\tr\bigl((\nabla^2v)^3\bigr).
\]
Thus the right-hand side is exactly our functional $\Dcal[f_t]$.  Consequently, $\Dcal[f]\ge0$ is currently known for all admissible densities in dimensions $d=1,2,3,4$.  For $d\ge5$, we are not aware of either a proof or a counterexample, and the sign problem should therefore be regarded as open.

\subsection{Open problems}

Theorem \ref{prop:theta-dichotomy} shows that the large-dimensional problem has a basic two-way structure.  Either one finds an admissible density with $\Dcal[f]<0$, in which case $\theta_\infty^*=-\infty$, or else the asymptotic problem remains on the nonnegative branch.  Within that nonnegative branch, the present results do not determine whether the optimal dimension-free lower bound degenerates to
\[
        \theta_\infty^*=0,
\]
or instead remains strictly positive and converges to a value in the interval $(0,1)$.  The generalized hexagonal frequency-lattice calculation above suggests that structured higher-dimensional families may be relevant, but it does not determine the limiting value of $\theta_\infty^*$.

At present, the basic unresolved questions are the following.
\begin{enumerate}[label=(\roman*)]
\item Determine $\theta_2^*$ more precisely inside the rigorous interval \eqref{eq:theta-two-range}.
\item Decide whether the monotonicity in \eqref{eq:theta-monotone} is strict for some, or all, dimensions.
\item Determine whether $\Dcal[f]$ can ever be negative.  Equivalently, by Lemma \ref{lem:identities}, determine whether Fisher information is always convex along heat flow.  By Theorem \ref{prop:theta-dichotomy}, a single admissible example with $\Dcal[f]<0$ would force $\theta_\infty^*=-\infty$.
\item Assuming the negative-\(\Dcal\) alternative in Theorem \ref{prop:theta-dichotomy} does not occur, determine the optimal dimension-free constant $\theta_{\mathrm{univ}}\in[0,1)$ such that
\[
        \Ical[f]\Dcal[f]\ge \theta_{\mathrm{univ}}\Qcal[f]^2
\]
for every $d\ge2$ and every admissible density $f$ on $\R^d$.  Equivalently, determine whether the asymptotic sharp constant satisfies $\theta_\infty^*=0$ or $\theta_\infty^*\in(0,1)$.
\end{enumerate}

Collecting the rigorous results currently available, we have
\[
        \theta_1^*=1,
        \qquad
        \frac12\le\theta_2^*<1,
        \qquad
        \theta_{d+1}^*\le\theta_d^*<1\quad(d\ge2),
\]
and hence
\[
        \theta_\infty^*=\inf_{d\ge2}\theta_d^*\in[-\infty,1).
\]
In view of Theorem \ref{prop:theta-dichotomy}, this leaves only two global possibilities:
\[
        \theta_\infty^*\in[0,1)
        \qquad\text{or}\qquad
        \theta_\infty^*=-\infty.
\]
At present, the nonnegative alternative is rigorously verified only through dimension four.  The exact values of $\theta_d^*$, the sign of $\Dcal$ in dimensions $d\ge5$, the existence of a dimension-free lower bound, and the asymptotic behavior of $\theta_\infty^*$ remain open.

\section{Conclusion}

The present paper settles the fate of Fisher-information log-convexity along heat flow: it holds in dimension one by \cite{LedouxNairWang} and fails in every dimension $d\ge2$ by Theorem \ref{thm:main} and Corollary \ref{cor:high-d}.  Through the implication chain \eqref{eq:hierarchy}, this also closes the multidimensional versions of GCMC, McKean's conjecture, and the entropy power conjecture.

What remains is the quantitative problem of understanding the sharp constants $\theta_d^*$.  Section \ref{sec:high-d} shows that the two-dimensional mechanism extends to higher dimensions both by tensorization and by a frequency-lattice generalization of the hexagonal zero-mode mechanism in $\R^d$, while Section \ref{sec:open-problems} proves that $\theta_1^*=1$, that $\theta_{d+1}^*\le\theta_d^*$, and that the first barrier to any global lower bound is the sign of $\Dcal$.  The main open directions are therefore to determine whether $\Dcal$ can become negative in high dimension, to sharpen the bounds on $\theta_d^*$, and to understand whether structured families of this kind capture the true asymptotic behavior of $\theta_\infty^*$.


\end{document}